\documentclass{llncs}
\usepackage[T1]{fontenc}
\usepackage[utf8x]{inputenx}


\usepackage{graphicx}
\usepackage{xspace}
\usepackage{amsmath}
\usepackage{amssymb}
\usepackage{hyperref}
\usepackage{cleveref}
\usepackage{enumitem}
\usepackage{twoopt}
\usepackage{etoolbox}
\usepackage{proof}
\usepackage{multicol}
\usepackage{color}
\usepackage{anyfontsize}

\Crefname{equation}{Eq.}{Eqs.}
\crefname{equation}{eq.}{eqs.}
\Crefname{figure}{Fig.}{Figs.}
\crefname{figure}{fig.}{figs.}
\Crefname{tabular}{Tab.}{Tabs.}
\crefname{tabular}{tab.}{tabs.}
\Crefname{definition}{Def.}{Defs.}
\crefname{definition}{def.}{defs.}
\Crefname{proposition}{Prop.}{Props.}
\crefname{proposition}{prop.}{props.}
\Crefname{section}{Sect.}{Sects.}
\crefname{section}{sect.}{sects.}

\usepackage{pifont}
\newcommand{\cmark}{\ding{51}}
\newcommand{\xmark}{\ding{55}}

\usepackage{tikz}
\usetikzlibrary{calc,automata,positioning}
\tikzstyle{every node}=[font=\scriptsize]
\tikzstyle{state} = [draw,fill=white,circle,minimum size=4mm,inner sep=0pt,thick]
\tikzstyle{dot} = [fill,circle,inner sep=0mm,minimum size=1.25mm,line width=0mm]

\setitemize{noitemsep,topsep=0pt,parsep=0pt,partopsep=0pt,leftmargin=9.5pt,labelsep=3pt}
\setenumerate{noitemsep,topsep=0pt,parsep=0pt,partopsep=0pt}

\newcommand{\modest}{\textsc{\mbox{Modest}}\xspace}

\newcommand{\toolset}{\textsc{\mbox{Modest} Toolset}\xspace}

\newcommand{\eg}{e.g.\xspace}
\newcommand{\ie}{i.e.\xspace}
\newcommand{\st}{\ifmmode \:\text{s.t.}\ \else s.t.\xspace\fi}
\newcommand{\etal}{et al.\xspace}
\newcommand{\wrt}{w.r.t.\xspace}

\newcommand{\set}[1]{\ensuremath{\{\,#1\,\}}}
\newcommand{\tuple}[1]{\ensuremath{\langle #1 \rangle}}
\newcommand{\powerset}[1]{\ensuremath{\mathcal{P}({#1})}\xspace}
\newcommand{\RR}{\ensuremath{\mathbb{R}}\xspace}
\newcommand{\RRplus}{\ensuremath{\mathbb{R}^+}\xspace}
\newcommand{\RRpluszero}{\ensuremath{\mathbb{R}^{+}_{0}}\xspace}

\newcommand{\Dist}[1]{\ensuremath{\mathrm{Dist}({#1})}\xspace} 
\newcommand{\support}[1]{\ensuremath{\mathrm{support}({#1})}\xspace}
\newcommand{\Dirac}[1]{\ensuremath{\mathcal{D}(#1)}\xspace}
\newcommand{\Prob}[1]{\ensuremath{\mathrm{Prob}({#1})}} 

\newcommand{\defeq}{\mathrel{\vbox{\offinterlineskip\ialign{\hfil##\hfil\cr{\tiny \ensuremath{\mathrm{def}}}\cr\noalign{\kern0.25ex}$=$\cr}}}}
\newcommand{\Clocks}{\ensuremath{\mathcal{C}}\xspace}
\newcommand{\ClockDistrs}{\ensuremath{F}\xspace}
\newcommand{\Loc}{\ensuremath{\mathit{Loc}}\xspace}
\newcommand{\Edges}{\ensuremath{E}\xspace}
\newcommand{\Alphabet}{\ensuremath{A}\xspace}
\newcommand{\InitialLoc}[1][]{\ifstrempty{#1}{\ensuremath{\ell_\mathit{init}}}{\ensuremath{\ell_{\mathit{init}_#1}}}\xspace}
\newcommand{\xtr}[1]{\xrightarrow{\protect{\raisebox{-1pt}[0pt][0pt]{\ensuremath{\scriptstyle{#1}}}}}}
\newcommand{\States}{\ensuremath{S}\xspace}
\newcommand{\InitialState}[1][]{\ifstrempty{#1}{\ensuremath{s_\mathit{init}}}{\ensuremath{s_{\mathit{init}_#1}}}\xspace}
\newcommand{\Trans}{\ensuremath{T}\xspace}
\newcommand{\Val}[1][]{\ifstrempty{#1}{\ensuremath{\mathit{Val}}}{\ensuremath{\mathrm{Val}(#1)}}\xspace}
\newcommand{\zeroval}[0]{\ensuremath{\mathbf{0}}}

\newcommand{\Sched}{\ensuremath{\mathfrak{S}}\xspace} 
\newcommand{\sched}{\ensuremath{\mathfrak{s}}\xspace} 
\newcommand{\Pmax}[2]{\ensuremath{\mathrm{P}^{#2}_\mathrm{\!max}(#1)}\xspace}
\newcommand{\Pmin}[2]{\ensuremath{\mathrm{P}^{#2}_\mathrm{\!min}(#1)}\xspace}

\newcommand{\sem}[1]{\ensuremath{[\hspace{-1.5pt}[#1]\hspace{-1.5pt}]}}
\newcommand{\Reach}{\ensuremath{J}\xspace} 
\newcommand{\restart}{R} 
\newcommand{\isfun}{\ensuremath{\colon}} 
\newcommand{\hide}[1]{\ignorespaces} 
\newcommand{\schedngeq}{\ensuremath{\not\succcurlyeq}\xspace}
\newcommand{\schedgeq}{\ensuremath{\succcurlyeq}\xspace}
\newcommand{\schedgtr}{\ensuremath{\succ}\xspace}
\newcommand{\schedeq}{\ensuremath{\approx}\xspace}
\newcommand{\schedneq}{\ensuremath{\not\approx}\xspace}
\newcommand{\schednleq}{\ensuremath{\not\preccurlyeq}\xspace}

\newcommand{\schedlss}{\ensuremath{\prec}\xspace}

\newcommand{\proofnegspace}{\vspace{0pt}}

\hypersetup{
  pdftitle = {A Hierarchy of Scheduler Classes for Stochastic Automata}
}

\begin{document}

\title{
A Hierarchy of Scheduler Classes\\for Stochastic Automata
\thanks{This work is supported by the 3TU project ``Big Software on the Run'', by the JST ERATO HASUO Metamathematics for Systems Design project (JPMJER1603), by the ERC Grant 695614 (POWVER), and by SeCyT-UNC 05/BP12 and 05/B497.}%
}

\author{
\mbox{Pedro R.\ D'Argenio\inst{1,2}
\and Marcus Gerhold\inst{3}
\and Arnd Hartmanns\inst{3}
\and Sean Sedwards\inst{4}}
}

\institute{
Universidad Nacional de C\'ordoba, C\'ordoba, Argentina
\and Saarland University, Saarbr\"ucken, Germany
\and University of Twente, Enschede, The Netherlands
\and National Institute of Informatics, Tokyo, Japan
}

\maketitle

\begin{abstract}
Stochastic automata are a formal compositional model for concurrent stochastic timed systems, with general distributions and nondeterministic choices.
Measures of interest are defined over \emph{schedulers} that resolve the nondeterminism.
In this paper we investigate the power of various theoretically and practically motivated classes of schedulers, considering the classic complete-information view and a restriction to non-prophetic schedulers.
We prove a hierarchy of scheduler classes \wrt unbounded probabilistic reachability.
We find that, unlike Markovian formalisms, stochastic automata distinguish most classes even in this basic setting.
Verification and strategy synthesis methods thus face a tradeoff between powerful and efficient classes.
Using lightweight scheduler sampling, we explore this tradeoff and demonstrate the concept of a useful approximative verification technique for stochastic automata.  
\end{abstract}

\section{Introduction}
\label{sec:Introduction}

The need to analyse continuous-time stochastic models arises in many practical contexts, including critical infrastructures~\cite{ACGHHKMR15}, railway engineering~\cite{RS16}, space mission planning~\cite{BGHKNS16}, and security~\cite{HKKS16}.
This has led to a number of discrete event simulation tools, such as those for networking~\cite{NS3Website,Pon93,ZBG98}, whose probabilistic semantics is founded on generalised semi-Markov processes (GSMP~\cite{HS87,Mat62}).
Nondeterminism arises through inherent concurrency of independent processes~\cite{BKKR11}, but may also be deliberate underspecification.
Modelling such uncertainty with probability is convenient for simulation, but not always adequate~\cite{AY06,KCC05}.
Various models and formalisms have thus been proposed to extend continuous-time stochastic processes with nondeterminism~\cite{BDHK06,BG02,EHZ10,HS00,Her02,Str93}.
It is then possible to \emph{verify} such systems by considering the extremal probabilities of a property.
These are the supremum and infimum of the probabilities of the property in the purely stochastic systems induced by classes of {\em schedulers} (also called \emph{strategies}, \emph{policies} or \emph{adversaries}) that resolve all nondeterminism.
If the nondeterminism is considered controllable, one may alternatively be interested in the \emph{planning} problem of synthesising a scheduler that satisfies certain probability bounds.

We consider closed systems of stochastic automata (SA~\cite{DK05}), which extend GSMP and feature both generally distributed stochastic delays as well as discrete nondeterministic choices.
The latter may arise from non-continuous distributions (\eg deterministic delays), urgent edges, and edges waiting on multiple clocks.
Model checking (extensions of) GSMP is known to be undecidable in general, and numerical verification algorithms exist for very limited subclasses of SA only:
Buchholz \etal\cite{BKS14} restrict to phase-type or matrix-exponential distributions, such that nondeterminism cannot arise (as each edge is guarded by a single clock).
Bryans \etal\cite{BBD03} propose two algorithms that require an a priori fixed scheduler, continuous bounded distributions, and that all active clocks be reset when a location is entered.
The latter forces regeneration on every edge, making it impossible to use clocks as memory between locations.
Regeneration is central to the work of Paolieri, Vicario \etal\cite{BBHPV13}, however they again exclude nondeterminism.
The only approach that handles nondeterminism is the region-based approximation scheme of Kwiatkowska \etal\cite{KNSS00} for a model closely related to SA, but restricted to bounded continuous distributions.
Without that restriction~\cite{HHH14}, error bounds and convergence guarantees are lost.

Evidently, the combination of nondeterminism and continuous probability distributions is a particularly challenging one.
With this paper, we take on the underlying problem from a fundamental perspective:
we investigate the power of, and relationships between, different classes of schedulers for SA.
Our motivation is, on the one hand, that a clear understanding of scheduler classes is crucial to design verification algorithms.
For example, Markov decision process (MDP) model checking works well because memoryless schedulers suffice for reachability, and the efficient time-bounded analysis of continuous-time MDP (CTMDP) exploits a relationship between two scheduler classes that are sufficiently simple, but on their own do not realise the desired extremal probabilities~\cite{BHHK15}.
When it comes to planning problems, on the other hand, practitioners desire \emph{simple} solutions, \ie schedulers that need little information and limited memory, so as to be explainable and suitable for implementation on \eg resource-constrained embedded systems.
Understanding the capabilities of scheduler classes helps decide on the tradeoff between simplicity and the ability to attain optimal results.

We use two perspectives on schedulers from the literature:
the classic com\-plete-information {\em residual lifetimes} semantics~\cite{BD04}, where optimality is defined via his\-tory-dependent schedulers that see the entire current state, and \emph{non-prophetic} schedulers~\cite{HHK16} that cannot observe the timing of \emph{future} events.
Within each perspective, we define classes of schedulers whose views of the state and history are variously restricted (\Cref{sec:Classes}).
We then prove their relative ordering \wrt achieving optimal reachability probabilities (\Cref{sec:PowerOfSchedulers}).
We find that SA distinguish most classes.
In particular, memoryless schedulers suffice in the complete-information setting (as is implicit in the method of Kwiatkowska \etal), but turn out to be suboptimal in the more realistic non-prophetic case.
Considering only the relative order of clock expiration times, as suggested by the first algorithm of Bryans \etal{}, surprisingly leads to partly suboptimal, partly incomparable classes.
Our distinguishing SA are small and employ a common nondeterministic gadget.
They precisely pinpoint the crucial differences and how schedulers interact with the various features of SA, providing deep insights into the formalism itself.

Our study furthermore forms the basis for the application of {\em lightweight scheduler sampling} (LSS) to SA.
LSS is a technique to use Monte Carlo simulation/statistical model checking with nondeterministic models.
On every LSS simulation step, a pseudo-random number generator (PRNG) is re-seeded with a hash of the identifier of the current scheduler and the (restricted) information about the current state (and previous states, for history-dependent schedulers) that the scheduler's class may observe.
The PRNG's first iterate then deterministically decides the scheduler's action.
LSS has been successfully applied to MDP~\cite{DLST15,LST15b,LST15a} and probabilistic timed automata~\cite{DHLS16,DHS17}.
Using only constant memory, LSS samples schedulers uniformly from a selected scheduler class to find ``near-optimal'' schedulers that conservatively approximate the true extremal probabilities.
Its principal advantage is that it is largely indifferent to the size of the state space and of the scheduler space; in general, sampling efficiency depends only on the likelihood of selecting near-optimal schedulers.
However, the mass of {\em near}-optimal schedulers in a scheduler class that also includes the optimal scheduler may be \emph{less} than the mass in a class that does \emph{not} include it.
Given that the mass of optimal schedulers may be vanishingly small, it may be advantageous to sample from a class of less powerful schedulers.
We explore these tradeoffs and demonstrate the concept of LSS for SA in \Cref{sec:Experiments}.
 
\smallskip\noindent\textbf{Other related work.}
Alur \etal first mention nondeterministic stochastic systems similar to SA in~\cite{ACD91}.
Markov automata (MA~\cite{EHZ10}), interactive Markov chains (IMC~\cite{Her02}) and CTMDP are special cases of SA restricted to exponential distributions.
Song \etal\cite{SZG12} look into partial information distributed schedulers for MA, combining earlier works of de Alfaro~\cite{deA99} and Giro and D'Argenio~\cite{GD07} for MDP.
Their focus is on information flow and hiding in parallel specifications.
Wolf \etal\cite{WBM06} investigate the power of classic (time-abstract, deterministic and memoryless) scheduler classes for IMC.
They establish (non-strict) subset relationships for almost all classes \wrt trace distribution equivalence, a very strong measure.
Wolovick and Johr~\cite{WJ06} show that the class of measurable schedulers for CTMDP is complete and sufficient for reachability problems.

\section{Preliminaries}
\label{sec:Preliminaries}

For a given set~$S$, its power set is $\powerset{S}$.
We denote by \RR, $\RRplus$, and $\RRpluszero$ the sets of real numbers, positive real numbers and non-negative real numbers, respectively.
A (discrete) \emph{probability distribution} over a set~$\varOmega$ is a function $\mu \isfun \varOmega \to [0, 1]$, such that $\support{\mu} \defeq \set{\omega \in \varOmega \mid \mu(\omega) > 0}$ is countable and \mbox{$\sum_{\omega \in \support{\mu}}{\mu(\omega)} = 1$}.
$\Dist{\varOmega}$ is the set of probability distributions over~$\varOmega$.
We write $\Dirac{\omega}$ for the \emph{Dirac} distribution for~$\omega$, defined by $\Dirac{\omega}(\omega) = 1$.
$\varOmega$ is \emph{measurable} if it is endowed with a $\sigma$-algebra $\sigma(\varOmega)$: a collection of \emph{measurable} subsets of $\varOmega$.
A (continuous) \emph{probability measure} over $\varOmega$ is a function $\mu \isfun \sigma(\varOmega) \to [0, 1]$, such that $\mu(\varOmega)=1$ and $\mu(\cup_{i \in I}\, B_i) = \sum_{i \in I}\, \mu(B_i)$ for any countable index set~$I$ and pairwise disjoint measurable sets $B_i\subseteq\varOmega$.
$\Prob{\varOmega}$ is the set of probability measures over $\varOmega$.
Each $\mu \isfun \Dist{\varOmega}$ induces a probability measure.
We write $\Dirac{\cdot}$ for the Dirac measure.
Given probability measures $\mu_1$ and $\mu_2$, we denote by $\mu_1 \otimes \mu_2$ the \emph{product measure}: the unique probability measure such that $(\mu_1 \otimes \mu_2)(B_1 \times B_2) = \mu_1(B_1) \cdot \mu_2(B_2)$, for all measurable $B_1$ and $B_2$.
For a collection of measures $(\mu_i)_{i\in I}$, we analogously denote the product measure by $\bigotimes_{i \in I} \mu_i$.
Let $\Val \defeq V \to \RRpluszero$ be the set of valuations for an (implicit) set $V$ of (non-negative real-valued) variables.
$\zeroval \in \Val$ assigns value zero to all variables.
Given $X\subseteq V$ and $v \in \Val$, we write $v[X]$ for the valuation defined by $v[X](x) = 0$ if $x \in X$ and $v[X](y) = v(y)$ otherwise.
For $t \in \RRpluszero$, $v + t$ is the valuation defined by $(v + t)(x) = v(x) + t$ for all $x \in V$.

\subsubsection{Stochastic Automata}
\label{sec:StochasticAutomata}

 extend labelled transition systems with stochastic \emph{clocks}:
real-valued variables that increase synchronously with rate 1 over time and expire some random amount of time after having been \emph{restarted}.
Formally:

\begin{definition}
\label{def:SA}
A \emph{stochastic automaton}~(SA~\cite{DK05}) is a tuple
$\tuple{\Loc, \Clocks, \Alphabet, \Edges, \ClockDistrs, \InitialLoc}$,
where
$\Loc$ is a countable set of \emph{locations},
$\Clocks$ is a finite set of \emph{clocks},
$\Alphabet$ is the finite \emph{action alphabet}, 
and $\Edges \isfun \Loc \to \powerset{ \powerset{\Clocks} \times \Alphabet \times \powerset{\Clocks} \times \Dist{\Loc} }$ is the \emph{edge function}, which maps each location to a finite set of edges that in turn consist of a \emph{guard set} of clocks, a label, a \emph{restart set} of clocks and a distribution over target locations.
$\ClockDistrs \isfun \Clocks \to \Prob{\RRpluszero}$ is the \emph{delay measure function} that maps each clock to a probability measure
, and
$\InitialLoc \in \Loc$ is the \emph{initial location}.
\end{definition}
We also write $\ell \xtr{G, a, R}_\Edges \mu$ for $\tuple{G, a, R, \mu} \in
\Edges(\ell)$.
W.l.o.g. 
we restrict to SA where edges are fully characterised by source state and action label, \ie whenever $\ell \xtr{G_1, a, R_1}_\Edges \mu_1$ and $\ell \xtr{G_2, a, R_2}_\Edges \mu_2$, then $G_1 = G_2$, $R_1 = R_2$ and $\mu_1 = \mu_2$.

Intuitively, an SA starts in $\InitialLoc$ with all clocks expired.
An edge $\ell \xtr{G, a, R}_\Edges \mu$ may be taken only if all clocks in $G$ are expired.
If any edge is enabled, some edge must be taken (\ie all actions are \emph{urgent}).
When an edge is taken, its action is $a$, all clocks in $R$ are restarted, other expired clocks remain expired, and $\mu$ encodes its discrete branching: we move to successor location $\ell'$ with probability $\mu(\ell')$.
There, another edge may be taken immediately or we may need to wait until some further clocks expire, and so on.
When a clock $c$ is restarted, the time until it expires is chosen randomly according to the probability measure $F(c)$.

\begin{example}
\begin{figure}[t]
\begin{minipage}[b]{0.36\textwidth}
\centering
\begin{tikzpicture}[on grid]
  \node[state] (l0) {$\ell_0$};
  \coordinate[left=0.3 of l0.west] (start);
  \node[] (me) [above left=0.4 and 1.1 of l0] {\small$M_0$:};
  \node[] (distr) [above right=0.2 and 1.3 of l0,align=left] {$x\colon \textsc{Uni}(0, 1)$\\$y\colon \textsc{Uni}(0, 1)$};
  \node[state] (l1) [below=1 of l0] {$\ell_1$};
  \node[state] (l2) [below left=0.875 and 0.75 of l1] {$\ell_2$};
  \node[state] (l3) [below right=0.875 and 0.75 of l1] {$\ell_3$};
  \node[state] (yes) [below=1 of l2] {\cmark};
  \node[state] (no) [below=1 of l3] {\xmark};
  ;
  \path[->]
    (start) edge node {} (l0)
    (l0) edge node[right,pos=0.25,inner sep=0.5mm] {\strut$\varnothing$} node[right,pos=0.7,inner sep=0.5mm] {\strut$\text{\restart}(\{ x, y \})$} (l1)
    (l1) edge[] node[left,pos=0.15,inner sep=1mm] {\strut$\varnothing\!$} (l2)
    (l1) edge[] node[right,pos=0.15,inner sep=1mm] {\strut$\varnothing$} (l3)
    (l2) edge node[left,pos=0.25,inner sep=0.5mm] {\strut$\{ x \}$} (yes)
    (l2) edge[bend right=20] node[above,pos=0.33,inner sep=2mm] {\strut$\{ y \}$} (no)
    (l3) edge[bend left=20] node[above,pos=0.33,inner sep=2mm] {\strut$\{ y \}$} (yes)
    (l3) edge node[right,pos=0.25,inner sep=0.5mm] {\strut$\{ x \}$} (no)
  ;
\end{tikzpicture}
\caption{Example SA $M_0$}
\label{fig:SAExample}
\end{minipage}%
\begin{minipage}[b]{0.64\textwidth}
\centering
\begin{tikzpicture}[on grid]
  \node[] (l0) {$\tuple{\ell_0, \tuple{0, 0}, \tuple{0, 0}}$};
  \coordinate[left=0.3 of l0.west] (start);
  \node[] (me) [above left=0.0 and 3.15 of l0] {\small$\sem{M_0}$:};
  \node[] (l1l) [below left=1 and 2.8 of l0] {\strut$\tuple{\ell_1, \tuple{0, 0}, \tuple{0, 0}}$};
  \node[] (l1m) [below=1 of l0] {\strut$\tuple{\ell_1, \tuple{0, 0}, \tuple{e(x), e(y)}}$};
  \node[] (l1r) [below right=1 and 2.8 of l0] {\strut$\tuple{\ell_1, \tuple{0, 0}, \tuple{1, 1}}$};
  \node[] (l1ldots) [below left=0.4 and 0.33 of l0] {$\cdots$};
  \node[] (l1rdots) [below right=0.4 and 0.33 of l0] {$\,\cdots$};
  \node[] (l1mdots1) [below left=1.01 and 1.55 of l0] {$\cdots$};
  \node[] (l1mdots2) [below right=1.01 and 1.61 of l0] {$\cdots$};
  \node[] (l1text) [below right=0.4 and 2.25 of l0,align=left] {$\sim\textsc{Uni}(0, 1)^2$};
  \node[] (l1ldots1) [below left=0.6 and 0.33 of l1l] {$\cdots$};
  \node[] (l1ldots2) [below right=0.6 and 0.33 of l1l] {$\cdots$};
  \node[] (l1rdots1) [below left=0.6 and 0.33 of l1r] {$\,\cdots$};
  \node[] (l1rdots2) [below right=0.6 and 0.33 of l1r] {$\,\cdots$};
  \node[] (l2t) [below left=1.0 and 1.9 of l1m] {$\tuple{\ell_2, \tuple{0, 0}, \tuple{e(x), e(y)}}$};
  \node[] (l3t) [below right=1.0 and 1.9 of l1m] {$\tuple{\ell_3, \tuple{0, 0}, \tuple{e(x), e(y)}}$};
  \node[] (delaytext) [below left=0.5 and 1.9 of l3t,align=left] {\textit{(assume }$e(x) \!<\! e(y)$\textit{)}};
  \node[] (l2b) [below=1 of l2t] {$\tuple{\ell_2, \tuple{e(x), e(x)}, \tuple{e(x), e(y)}}$};
  \node[] (l3b) [below=1 of l3t] {$\tuple{\ell_3, \tuple{e(x), e(x)}, \tuple{e(x), e(y)}}$};
  \node[] (target) [below=0.8 of l2b] {$\tuple{\text{\cmark}, \tuple{e(x), e(x)}, \tuple{e(x), e(y)}}$};
  \node[] (error) [below=0.8 of l3b] {$\tuple{\text{\xmark}, \tuple{e(x), e(x)}, \tuple{e(x), e(y)}}$};
  ;
  \path[->]
    (start) edge node {} (l0)
    (l0) edge[bend left=5] (l1l)
    (l0) edge (l1m)
    (l0) edge[bend right=5] (l1r)
    (l1l) edge (l1ldots1)
    (l1l) edge (l1ldots2)
    (l1r) edge (l1rdots1)
    (l1r) edge (l1rdots2)
    (l1m) edge (l2t)
    (l1m) edge (l3t)
    (l2t) edge node[left] {$e(x)$} (l2b)
    (l3t) edge node[right] {$e(x)$} (l3b)
    (l2b) edge (target)
    (l3b) edge (error)
  ;
\end{tikzpicture}
\caption{Excerpt of the TPTS semantics of $M_0$}
\label{fig:SASemanticsExample}
\end{minipage}
\end{figure}
We show an example SA, $M_0$, in \Cref{fig:SAExample}.
Its initial location is $\ell_0$.
It has two clocks, $x$ and $y$, with $F(x)$ and $F(y)$ both being the continuous uniform distribution over the interval $[0,1]$.
No time can pass in locations $\ell_0$ and $\ell_1$, since they have outgoing edges with empty guard sets.
We omit action labels and assume every edge to have a unique label.
On entering $\ell_1$, both clocks are restarted.
The choice of going to either $\ell_2$ or $\ell_3$ from $\ell_1$ is nondeterministic, since the two edges are always enabled at the same time.
In $\ell_2$, we have to wait until the first of the two clocks expires.
If that is $x$, we have to move to location \cmark; if it is $y$, we have to move to \xmark.
The probability that both expire at the same time is zero.
Location $\ell_3$ behaves analogously, but with the target states interchanged.
\end{example}

\subsubsection{Timed Probabilistic Transition Systems}
form the semantics of SA.
They are finitely-nondeterministic uncountable-state transition systems:

\begin{definition}
\label{def:TPTS}
A (finitely nondeterministic)
\emph{timed probabilistic transition system} (TPTS) is a tuple
$\tuple{\States, \Alphabet', \Trans, \InitialState}$.
$\States$ is a measurable set of states $\tuple{\ell, v, e}\in\Loc\times\Val\times\Val$, where $\ell$ is the current location, $v$ is a valuation assigning to each clock its current value, and $e$ is a valuation keeping track of all clocks' expiration times. 
$\Alphabet' = \RRplus \uplus \Alphabet$ is the \emph{alphabet}, partitioned into \emph{delays} in \RRplus and \emph{jumps} in $\Alphabet$.
$\Trans \isfun \States \to \powerset{\Alphabet' \times \Prob{\States}}$ is the \emph{transition function}, which maps each state to a finite set of transitions, each consisting of a label in $\Alphabet'$ and a measure over target states.
The initial state is $\InitialState \in \States$.
For all $s \in \States$, we require $|\Trans(s)| = 1$ if $\exists\,\tuple{t, \mu} \in \Trans(s) \colon t \in \RRplus$, \ie states admitting delays are deterministic.
\end{definition}
We also write $s \xtr{a}_\Trans \mu$ for $\tuple{a, \mu} \in \Trans(s)$.
A \emph{run} is an infinite alternating sequence $s_0 a_0 s_1 a_1 \!\ldots \in (\States \times \Alphabet')^\omega$, with $s_0 = \InitialState$.
It resolves all nondeterministic and probabilistic choices.
A \emph{scheduler} resolves only the nondeterminism:

\begin{definition}
\label{def:Scheduler}
A measurable function $\sched \isfun (\States \times \Alphabet')^* \times \States \to \Dist{\Alphabet' \times \Prob{\States}}$ is a \emph{scheduler} if, for all histories $h \isfun (\States \times \Alphabet')^* \times \States$, $\tuple{a, \mu} \in \support{\sched(h)}$ implies $\mathit{lst}_h \xtr{a}_\Trans \mu$, where $\mathit{lst}_h$ is the last state of $h$.
\end{definition}
Once a scheduler has chosen $s_i \xtr{a}_\Trans \mu$, the successor state $s_{i+1}$ is picked randomly according to~$\mu$.
Every scheduler $\sched$ defines a probability measure $\mathbb{P}_\sched$ on the space of all runs.
For a formal definition, see~\cite{Wol12}.
As is usual, we restrict to \emph{non-Zeno} schedulers that make time diverge with probability one:
we require $\mathbb{P}_\sched(\Pi_\infty) = 1$, where $\Pi_\infty$ is the set of runs where the sum of delays is $\infty$.
In the remainder of this paper we consider extremal probabilities of reaching a set of goal locations~$G$:
\begin{definition}
\label{def:ReachProb}
For $G \subseteq \Loc$, let $\Reach_G \defeq \set{ \tuple{\ell, v, e} \in \States \mid \ell \in G }$.
For a scheduler \sched, let $\Pi_{\Reach_G}^{\sched}$ be the set of runs under $\sched$ with a state in~$\Reach_G$.
Let $\Sched$ be a class of schedulers.
Then $\Pmin{G}{\Sched}$ and $\Pmax{G}{\Sched}$ are the minimum and maximum \emph{reachability probabilities} for $G$ under $\Sched$, defined as $\Pmin{G}{\Sched} = \inf_{\sched\in\Sched} \mathbb{P}_\sched(\Pi_{\Reach_G})$ and $\Pmax{G}{\Sched} = \sup_{\sched\in\Sched} \mathbb{P}_\sched(\Pi_{\Reach_G})$, respectively.
\end{definition}

\subsubsection{Semantics of Stochastic Automata}
\label{sec:SASemantics}

We present here the residual lifetimes semantics of~\cite{BD04}, simplified for closed SA:
any delay step must be of the minimum delay that makes some edge become enabled.

\begin{definition}
\label{def:SASemantics}
The semantics of an SA $M = \tuple{\Loc, \Clocks, \Alphabet, \Edges, \ClockDistrs, \InitialLoc}$ is the TPTS\\[4pt]
\centerline{$
\sem{M} = \tuple{\Loc \times \Val \times \Val, \Alphabet \uplus \RRplus, \Trans_M, \tuple{\InitialLoc, \zeroval, \zeroval}}
$,}\\[4pt]
where $\Trans_M$ is the smallest \hide{(according to~$<$)} transition function satisfying inference rules\\[5pt]
\centerline{$
\infer
{\tuple{\ell, v, e} \strut\xtr{a}_{\Trans_M} \mu \otimes \Dirac{v[R]} \otimes \mathrm{Sample}^R_e}
{\ell \xtr{G, a, R}_\Edges \mu & \mathrm{En}(G, v, e)}
$}\\[5pt]
\centerline{$\infer
{\tuple{\ell, v, e} \strut\xtr{t}_{\Trans_M} \Dirac{\tuple{\ell, \mathit{v + t}, e}}}
{t \in \RRplus & \exists\hspace{1pt} \ell \xtr{G, a, R}_\Edges \mu \colon \mathrm{En}(G, v + t, e) & \forall\, t' \!\in\! [0, t), \ell \xtr{G, a, R}_\Edges \mu \colon \neg\, \mathrm{En}(G, v + t', e) }
$}\\[5pt]
with $\mathrm{En}(G, v, e) \defeq \forall\,x \in G \colon v(x) \geq e(x)$ characterising the enabled edges and\\[4pt]
\centerline{$
\mathrm{Sample}^R_e \defeq \bigotimes_{c \in \Clocks}
\begin{cases}
F(c) & \text{if } c \in R\\
\Dirac{e(c)} & \text{if } c \notin R.
\end{cases}
$}
\end{definition}
The second rule creates \emph{delay} steps of $t$ time units if no edge is enabled from now until just before $t$ time units have elapsed (third premise) but then, after exactly $t$ time units, some edge becomes enabled (second premise).
The first rule applies if an edge $\ell \xtr{G, a, R}_\Edges \mu$ is enabled:
a transition is taken with the edge's label, the successor state's location is chosen by $\mu$, $v$ is updated by resetting the clocks in $R$ to zero, and the expiration times for the restarted clocks are resampled.
All other expiration times remain unchanged.
Notice that $\sem{M}$ is also a nondeterministic labelled Markov processes~\cite{Wol12} (a proof can be found in~\cite{DArLM16}).

\begin{example}
\label{ex:SASemantics}
\Cref{fig:SASemanticsExample} outlines the semantics of $M_0$.
The first step from $\ell_0$ to all the states in $\ell_1$ is a single transition.
Its probability measure is the product of $F(x)$ and $F(y)$, sampling the expiration times of the two clocks.
We exemplify the behaviour of all of these states by showing it for the case of expiration times $e(x)$ and $e(y)$, with $e(x) < e(y)$.
In this case, to maximise the probability of reaching \cmark, we should take the transition to the state in $\ell_2$.
If a scheduler $\sched$ can see the expiration times, noting that only their order matters here, it can always make the optimal choice and achieve $\Pmax{\set{\text{\cmark}}}{\{\sched\}} = 1$.
\end{example}

\section{Classes of Schedulers}
\label{sec:Classes}

We now define classes of schedulers with restricted information, hiding in various combinations the history of states and parts of states, such as clock values and expiration times.
All definitions consider TPTS as in Definition~\ref{def:SASemantics}, and we require for all \sched that $\tuple{a, \mu} \in \support{\sched(h)} \Rightarrow \mathit{lst}_h \xtr{a}_\Trans \mu$, as in Definition~\ref{def:Scheduler}.

\subsection{Classic Schedulers}
\label{sec:ClassicSchedulers}

We first consider the ``classic'' complete-information setting where schedulers can in particular see expiration times.
We start with restricted classes of history-dependent schedulers.
Our first restriction hides the values of all clocks, only revealing the total time since the start of the run.
This is inspired by the step-counting or time-tracking schedulers needed to obtain optimal step-bounded or time-bounded reachability probabilities on MDP or Markov automata:

\begin{definition}
\label{def:ClassicHistLTE}
A classic history-dependent \emph{global-time} scheduler is a measurable function
$\sched\isfun(\States|_{\ell,t,e} \times \Alphabet \uplus \RRplus)^* \times \States|_{\ell,t,e} \to \Dist{\Alphabet \uplus \RRplus \times \Prob{\States}}$,
where $\States|_{\ell,t,e} \defeq \Loc \times \RRpluszero \times \Val$ with the second component being the total time $t$ elapsed since the start of the run. 
We write $\Sched^\mathit{hist}_{\ell,t,e}$ for the set of all such schedulers.
\end{definition}
We next hide the values of all clocks, revealing only their expiration times:

\begin{definition}
\label{def:ClassicHistLE}
A classic history-dependent \emph{location-based} scheduler is a measurable function
$\sched\isfun(\States|_{\ell,e} \times \Alphabet \uplus \RRplus)^* \times \States|_{\ell,e} \to \Dist{\Alphabet \uplus \RRplus \times \Prob{\States}}$,
where $\States|_{\ell,e} \defeq \Loc \times \Val$, with the second component being the clock expiration times~$e$.
$\Sched^\mathit{hist}_{\ell,e}$ is the set of all such schedulers.
\end{definition}
Having defined three classes of classic history-dependent schedulers, $\Sched^\mathit{hist}_{\ell,v,e}$, $\Sched^\mathit{hist}_{\ell,t,e}$ and $\Sched^\mathit{hist}_{\ell,e}$, noting that $\Sched^\mathit{hist}_{\ell,v,e}$ denotes all schedulers of Def.~\ref{def:Scheduler}, we also consider them with the restriction that they only see the relative order of clock expiration, instead of the exact expiration times:
for each pair of clocks $c_1,c_2$, these schedulers see the relation $\sim\:\in\{<,=,>\}$ in $e(c_1) - v(c_1) \sim e(c_2) - v(c_2)$.
E.g.\ in $\ell_1$ of Example~\ref{ex:SASemantics}, the scheduler would not see $e(x)$ and $e(y)$, but only whether $e(x) < e(y)$ or vice-versa (since $v(x) = v(y) = 0$, and equality has probability~0).
We consider this case because the expiration order is sufficient for the first algorithm of Bryans \etal\cite{BBD03}, and would allow optimal decisions in $M_0$ of \Cref{fig:SAExample}.
We denote the relative order information by $o$, and the corresponding scheduler classes by $\Sched^\mathit{hist}_{\ell,v,o}$, $\Sched^\mathit{hist}_{\ell,t,o}$ and $\Sched^\mathit{hist}_{\ell,o}$.
We now define memoryless schedulers, which only see the current state and are at the core of \eg MDP model checking.
On most formalisms, they suffice to obtain optimal reachability probabilities.

\begin{definition}
A classic \emph{memoryless} scheduler 
is a measurable function
$\sched\isfun\States \to \Dist{\Alphabet \uplus \RRplus \times \Prob{\States}}$.
$\Sched^\mathit{ml}_{\ell,v,e}$ is the set of all such schedulers.
\end{definition}
We apply the same restrictions as for history-dependent schedulers:

\begin{definition}
A classic memoryless global-time scheduler is a measurable function
$\sched\isfun\States|_{\ell,t,e} \to \Dist{\Alphabet \uplus \RRplus \times \Prob{\States}}$,
with $\States|_{\ell,t,e}$ as in Definition~\ref{def:ClassicHistLTE}.
We write $\Sched^\mathit{ml}_{\ell,t,e}$ for the set of all such schedulers.
\end{definition}

\begin{definition}
A classic memoryless location-based scheduler is a measurable function
$\sched\isfun\States|_{\ell,e} \to \Dist{\Alphabet \uplus \RRplus \times \Prob{\States}}$,
with $\States|_{\ell,e}$ as in Definition~\ref{def:ClassicHistLE}.
We write $\Sched^\mathit{ml}_{\ell,e}$ for the set of all such schedulers.
\end{definition}
Again, we also consider memoryless schedulers that only see the expiration order, so we have memoryless scheduler classes $\Sched^\mathit{ml}_{\ell,v,e}$, $\Sched^\mathit{ml}_{\ell,t,e}$, $\Sched^\mathit{ml}_{\ell,e}$, $\Sched^\mathit{ml}_{\ell,v,o}$, $\Sched^\mathit{ml}_{\ell,t,o}$ and $\Sched^\mathit{ml}_{\ell,o}$.
Class $\Sched^\mathit{ml}_{\ell,o}$ is particularly attractive because it has a compact finite domain.

\subsection{Non-Prophetic Schedulers}
\label{sec:NonPropheticSchedulers}

Consider the SA $M_0$ in \Cref{fig:SAExample}.
No matter which of the previously defined scheduler classes we choose, we always find a scheduler that achieves probability $1$ to reach \cmark, and a scheduler that achieves probability $0$.
This is because they can all see the expiration times or expiration order of $x$ and $y$ when in $\ell_1$.
When in $\ell_1$, $x$ and $y$ have not yet expired---this will only happen later, in $\ell_2$ or $\ell_3$---yet the schedulers already know which clock will ``win''.
The classic schedulers can thus be seen to make decisions based on the timing of \emph{future} events.
This \emph{prophetic} scheduling has already been observed in~\cite{BD04}, where a ``fix'' in the form of 
the \emph{spent lifetimes} semantics was proposed.
Hartmanns \etal\cite{HHK16} have shown that this not only still permits prophetic scheduling, but even admits \emph{divine} scheduling, where a scheduler can 
\emph{change} the future.
The authors propose a complex \emph{non-prophetic} semantics that provably removes all prophetic and divine behaviour.

Much of the complication of the non-prophetic semantics of~\cite{HHK16} is due to it being specified for open SA that include delayable actions.
For the closed SA setting of this paper, prophetic scheduling can be more easily excluded by hiding from the schedulers all information about what will happen in the future of the system's evolution.
This information is only contained in the expiration times $e$ or the expiration order $o$.
We can thus keep the 
semantics of \Cref{sec:SASemantics}{} 
and modify the definition of schedulers to exclude prophetic behaviour by construction.

In what follows, we thus also consider all scheduler classes of \Cref{sec:ClassicSchedulers} with the added constraint that the expiration times, resp.\ the expiration order, are not visible, resulting in the \emph{non-prophetic} classes $\Sched^\mathit{hist}_{\ell,v}$, $\Sched^\mathit{hist}_{\ell,t}$, $\Sched^\mathit{hist}_{\ell}$, $\Sched^\mathit{ml}_{\ell,v}$, $\Sched^\mathit{ml}_{\ell,t}$ and $\Sched^\mathit{ml}_{\ell}$.
Any non-prophetic scheduler can only reach \cmark{} of $M_0$ with probability~$\frac{1}{2}$.

\section{The Power of Schedulers}
\label{sec:PowerOfSchedulers}

Now that we have defined a number of classes of schedulers, 
we need to determine what the effect of the restrictions is on our ability to optimally control an SA.
We thus evaluate the power of scheduler classes \wrt unbounded reachability probabilities (\Cref{def:ReachProb}) on the semantics of SA.
We will see that this simple setting already suffices to reveal interesting differences between scheduler classes.

For two scheduler classes $\Sched_1$ and $\Sched_2$, we write $\Sched_1 \schedgeq \Sched_2$ if, for all SA and all sets of goal locations $G$, $\Pmin{G}{\Sched_1} \leq \Pmin{G}{\Sched_2}$ and $\Pmax{G}{\Sched_1} \geq \Pmax{G}{\Sched_2}$.
We write $\Sched_1 \schedgtr \Sched_2$ if additionally there exists at least one SA and set $G'$ where $\Pmin{G'}{\Sched_1} < \Pmin{G'}{\Sched_2}$ or $\Pmax{G'}{\Sched_1} > \Pmax{G'}{\Sched_2}$.
Finally, we write $\Sched_1 \schedeq \Sched_2$ for $\Sched_1 \schedgeq \Sched_2 \wedge \Sched_2 \schedgeq \Sched_1$, and $\Sched_1 \schedneq \Sched_2$, \ie the classes are incomparable, for $\Sched_1 \schedngeq \Sched_2 \wedge \Sched_2 \schedngeq \Sched_1$.
Unless noted otherwise, we omit proofs for $\Sched_1 \schedgeq \Sched_2$ when it is obvious that the information available to $\Sched_1$ includes the information available to $\Sched_2$.
Our proofs are all based on the resolution of a single nondeterministic choice between two actions, to eventually reach one of two locations.
We therefore prove only \wrt the maximum probability, $p_{\max}$, since in all cases the minimum probability is given by $1-p_{\max}$ and an analogous proof for $p_{\min}$ can be made by relabelling locations.
In several proofs that use distinguishing SA with a similar structure to $M_0$, we write $\Pmax{\Sched_x^y}{}$ for $\Pmax{\set{\text{\cmark}}}{\Sched_x^y}$ to improve readability.

\subsection{The Classic Hierarchy}
\label{sec:PropheticHierarchy}

We first establish that all classic history-dependent scheduler classes are equivalent:

\begin{figure}[t]
\begin{minipage}[b]{0.6\textwidth}
\centering
\begin{tikzpicture}[on grid,x=1.75cm,y=1.3cm]
  \tikzstyle{every node}=[font=\normalsize]
  \node (histlve) {$\Sched^\mathit{hist}_{\ell,v,e}$};
  \node (histlte) [left=1 of histlve] {$\Sched^\mathit{hist}_{\ell,t,e}$};
  \node (histle) [left=1 of histlte] {$\Sched^\mathit{hist}_{\ell,e}$};
  \node (mllve) [above=1 of histlve] {$\Sched^\mathit{ml}_{\ell,v,e}$};
  \node (mllte) [left=1 of mllve] {$\Sched^\mathit{ml}_{\ell,t,e}$};
  \node (mlle) [left=1 of mllte] {$\Sched^\mathit{ml}_{\ell,e}$};
  \node (mllvo) [above=1 of mllve] {$\Sched^\mathit{ml}_{\ell,v,o}$};
  \node (mllto) [left=1 of mllvo] {$\Sched^\mathit{ml}_{\ell,t,o}$};
  \node (mllo) [left=1 of mllto] {$\Sched^\mathit{ml}_{\ell,o}$};
  \node (histlvo) [right=1 of histlve] {$\Sched^\mathit{hist}_{\ell,v,o}$};
  \node (histlto) [above=1 of histlvo] {$\Sched^\mathit{hist}_{\ell,t,o}$};
  \node (histlo) [above=1 of histlto] {$\Sched^\mathit{hist}_{\ell,o}$};
  \path[sloped] (histlve) -- node[midway] {\schedeq} (histlte);
  \path[sloped] (histlte) -- node[midway] {\schedeq} (histle);
  \path[sloped] (histlve) -- node[midway] {\schedeq} (mllve);
  \path[sloped] (histlte) -- node[midway] {\schedgtr} (mllte);
  \path[sloped] (mllve) -- node[midway] {\schedlss} (mllte);
  \path[sloped] (mllte) -- node[midway] {\schedlss} (mlle);
  \path[sloped] (histle) -- node[midway] {\schedgtr} (mlle);
  \path[sloped] (mllve) -- node[midway] {\schedgtr} (mllvo);
  \path[sloped] (mllte) -- node[midway] {\schedneq} (mllto);
  \path[sloped] (mlle) -- node[midway] {\schedneq} (mllo);
  \path[sloped] (mllvo) -- node[midway] {\schedlss} (mllto);
  \path[sloped] (mllto) -- node[midway] {\schedlss} (mllo);
  \path[sloped] (histlve) -- node[midway] {\schedgtr} (histlvo);
  \path[sloped] (histlvo) -- node[midway] {\schedeq} (histlto);
  \path[sloped] (histlto) -- node[midway] {\schedeq} (histlo);
  \path[sloped] (histlto) -- node[midway] {\schedgtr} (mllve);
  \path[sloped] (histlo) -- node[midway] {\schedlss} (mllvo);
  ;
\end{tikzpicture}
\caption{Hierarchy of classic scheduler classes}
\label{fig:PropheticHierarchy}
\end{minipage}%
\begin{minipage}[b]{0.4\textwidth}
\centering
\begin{tikzpicture}[on grid,x=1.55cm,y=1.3cm]
  \tikzstyle{every node}=[font=\normalsize]
  \node (histlv) {$\Sched^\mathit{hist}_{\ell,v}$};
  \node (histlt) [left=1 of histlv] {$\Sched^\mathit{hist}_{\ell,t}$};
  \node (histl) [left=1 of histlt] {$\Sched^\mathit{hist}_{\ell}$};
  \node (mllv) [above=1 of histlv] {$\Sched^\mathit{ml}_{\ell,v}$};
  \node (mllt) [left=1 of mllv] {$\Sched^\mathit{ml}_{\ell,t}$};
  \node (mll) [left=1 of mllt] {$\Sched^\mathit{ml}_{\ell}$};
  \path[sloped] (histlv) -- node[midway] {\schedeq} (histlt);
  \path[sloped] (histlt) -- node[midway] {\schedeq} (histl);
  \path[sloped] (histlv) -- node[midway] {\schedgtr} (mllv);
  \path[sloped] (histlt) -- node[midway] {\schedgtr} (mllt);
  \path[sloped] (mllv) -- node[midway] {\schedlss} (mllt);
  \path[sloped] (mllt) -- node[midway] {\schedlss} (mll);
  \path[sloped] (histl) -- node[midway] {\schedgtr} (mll);
  ;
\end{tikzpicture}
\caption{Non-prophetic classes}
\label{fig:NonPropheticHierarchy}
\end{minipage}
\end{figure}

\begin{proposition}
\label{prop:EquivalenceClassicHist}
$\Sched^\mathit{hist}_{\ell,v,e} \schedeq \Sched^\mathit{hist}_{\ell,t,e} \schedeq \Sched^\mathit{hist}_{\ell,e}$.
\end{proposition}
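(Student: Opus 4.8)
The plan is to reduce all three equivalences to the single non-trivial inclusion $\Sched^\mathit{hist}_{\ell,e} \schedgeq \Sched^\mathit{hist}_{\ell,v,e}$, i.e.\ to show that the weakest of the three classes is already as powerful as the strongest. First I would dispatch the reverse chain $\Sched^\mathit{hist}_{\ell,v,e} \schedgeq \Sched^\mathit{hist}_{\ell,t,e} \schedgeq \Sched^\mathit{hist}_{\ell,e}$ by the paper's information-inclusion convention: the total elapsed time $t$ is exactly the sum of the delay actions recorded along the history, hence available to $\Sched^\mathit{hist}_{\ell,v,e}$, and $\States|_{\ell,e}$ is a projection of $\States|_{\ell,t,e}$. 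Since $\schedgeq$ is transitive and $\schedeq$ is its symmetrisation, adding $\Sched^\mathit{hist}_{\ell,e} \schedgeq \Sched^\mathit{hist}_{\ell,v,e}$ closes the cycle and makes all three classes $\schedeq$.

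The idea behind the remaining inclusion is that, for a history-dependent scheduler, the clock valuation $v$ carries no information beyond what the rest of the observable history already determines. Along any finite history the scheduler sees every delay action $t \in \RRplus$, every jump action $a \in \Alphabet$, and all visited locations. Starting from $v = \zeroval$ in $\InitialState = \tuple{\InitialLoc, \zeroval, \zeroval}$, a delay updates $v \mapsto v + t$ and a jump with restart set $R$ updates $v \mapsto v[R]$; by the w.l.o.g.\ normal form in which an edge is fully characterised by its source location and action label, the set $R$ is recovered at each jump from the visible location and action. Hence $v$ is a deterministic, measurable function of the observable $(\ell,e)$-history, and I would let $\rho$ denote the induced map sending an $\Sched^\mathit{hist}_{\ell,e}$-history to the unique compatible full $(\ell,v,e)$-history.

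Given any $\sched \in \Sched^\mathit{hist}_{\ell,v,e}$, I would then define $\sched' \in \Sched^\mathit{hist}_{\ell,e}$ by $\sched'(h') \defeq \sched(\rho(h'))$. As $\rho$ preserves the last location and expiration times, $\sched'$ proposes only transitions enabled at $\mathit{lst}_{h'}$ and so is a legal scheduler of its class. Writing $\pi$ for the projection that hides the $v$-components, on every history reachable from $\InitialState$ the reconstruction recovers the true valuation, i.e.\ $\rho(\pi(h)) = h$; thus $\sched$ and $\sched'$ prescribe identical choices on all histories of positive measure, giving $\mathbb{P}_\sched = \mathbb{P}_{\sched'}$. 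Since membership in $\Reach_G$ depends only on locations, $\mathbb{P}_\sched(\Pi_{\Reach_G}) = \mathbb{P}_{\sched'}(\Pi_{\Reach_G})$ for every goal set $G$, and taking suprema and infima over the two classes yields $\Sched^\mathit{hist}_{\ell,e} \schedgeq \Sched^\mathit{hist}_{\ell,v,e}$. I expect the main obstacle to lie not in this combinatorial reconstruction but in the measure-theoretic bookkeeping behind it: arguing that $\rho$ is measurable and that the two induced run measures of \cite{Wol12} genuinely coincide on the generating cylinders over the uncountable state space. The one modelling assumption that is truly indispensable here is the edge normal form, without which the restart sets---and hence $v$---could not be reconstructed from the visible history.
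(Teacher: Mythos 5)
Your proof is correct and follows essentially the same route as the paper's: both reconstruct the hidden state components ($t$ by summing the delay labels, $v$ by summing delays per clock and applying the restart sets of the edges identified by the action labels) from the visible history, so that the weaker classes can mimic any scheduler of the stronger ones. Your write-up merely makes explicit what the paper leaves implicit---the reconstruction map $\rho$, the induced scheduler $\sched'(h') \defeq \sched(\rho(h'))$, and the reliance on the w.l.o.g.\ assumption that edges are determined by source location and action label.
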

\begin{proof}\proofnegspace{}
From the transition labels in $\Alphabet \uplus \RRplus$ in the history $(\States' \times \Alphabet \uplus \RRplus)^*$, with $\States' \in \set{ \States, \States|_{\ell,t,e}, \States|_{\ell,e} }$ depending on the scheduler class, we can reconstruct the total elapsed time as well as the values of all clocks:
to obtain the total elapsed time, sum the labels in $\RRplus$ up to each state;
to obtain the values of all clocks, do the same per clock and perform the \restart{s} of the edges identified by the~actions.
\end{proof}
The same argument applies among the expiration-order history-dependent classes:

\begin{proposition}
\label{prop:EquivalenceOrderedHist}
$\Sched^\mathit{hist}_{\ell,v,o} \schedeq \Sched^\mathit{hist}_{\ell,t,o} \schedeq \Sched^\mathit{hist}_{\ell,o}$.
\end{proposition}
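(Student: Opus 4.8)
My plan is to reuse the reconstruction argument of \Cref{prop:EquivalenceClassicHist} essentially verbatim, exploiting the fact that the three classes here are obtained from those of \Cref{prop:EquivalenceClassicHist} by uniformly replacing the exact expiration times $e$ with the coarser expiration order $o$. The informative direction of each $\schedeq$ is the inclusion $\Sched^\mathit{hist}_{\ell,v,o} \schedgeq \Sched^\mathit{hist}_{\ell,t,o} \schedgeq \Sched^\mathit{hist}_{\ell,o}$ going the ``hard'' way, i.e.\ showing that the least-informed class $\Sched^\mathit{hist}_{\ell,o}$ can already reproduce every decision available to the richer two. I would establish this by showing that a scheduler seeing only $\ell$ and $o$ can, at every history, reconstruct both the full valuation $v$ and the elapsed time $t$, so that the three information contents coincide.

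The crux — and the reason the earlier proof transfers without change — is that the reconstruction in \Cref{prop:EquivalenceClassicHist} never reads $e$ at all: it uses only the transition labels in $\Alphabet \uplus \RRplus$ recorded along the history, the visible sequence of locations, and the w.l.o.g.\ assumption that each pair $\tuple{\ell, a}$ determines its edge and hence its restart set $R$. Concretely, starting from $\tuple{\InitialLoc, \zeroval, \zeroval}$, one obtains $t$ by summing the delay labels up to each state, and the value of each clock $c$ by the same running sum reset to $0$ at every action whose restart set contains $c$. This procedure is orthogonal to the $e$-versus-$o$ distinction, so it runs identically when only the expiration order is tracked. Since $o$ is carried in all three state projections, a scheduler in $\Sched^\mathit{hist}_{\ell,o}$ thus has access to $\ell$, the reconstructed $v$, the reconstructed $t$, and the given $o$ — precisely the combined information of $\Sched^\mathit{hist}_{\ell,v,o}$ and $\Sched^\mathit{hist}_{\ell,t,o}$.

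Putting this together, each of the three classes can mimic any scheduler of the others, so they induce the same extremal reachability probabilities on every SA, which is exactly $\schedeq$ throughout. I expect no genuine obstacle: the argument is a direct replay. The only points meriting a sentence of care are (i) confirming explicitly that the reconstruction map is measurable, as required for the mimicking scheduler to be admissible, and (ii) checking that $o$ is indeed the same projection in all three restrictions, so that nothing beyond $v$ and $t$ needs to be recovered — in particular that the coarsening of $e$ to $o$ does not interact with the label-based reconstruction, which it does not.
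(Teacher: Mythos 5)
Your proposal is correct and follows essentially the same route as the paper: the paper proves \Cref{prop:EquivalenceOrderedHist} by observing that the reconstruction argument of \Cref{prop:EquivalenceClassicHist} (recovering $t$ and $v$ from the delay labels and the restart sets of action-identified edges) applies verbatim, since it never consults $e$ or $o$. Your additional remarks on measurability and on $o$ being the common projection are fine points of care that the paper leaves implicit.
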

However, the expiration-order history-dependent schedulers are strictly less powerful than the classic history-dependent ones:

\begin{proposition}
\label{prop:IncompClassicOrderedHist}
$\Sched^\mathit{hist}_{\ell,v,e} \schedgtr \Sched^\mathit{hist}_{\ell,v,o}$.
\end{proposition}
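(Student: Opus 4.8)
The inequality $\Sched^\mathit{hist}_{\ell,v,e} \schedgeq \Sched^\mathit{hist}_{\ell,v,o}$ is the ``obvious'' direction---from the exact expiration valuation $e$ together with $v$ one can compute the order information $o$, so every order-information scheduler is realised by an exact-information one---and by the paper's convention it needs no separate proof. The plan is thus to establish strictness by exhibiting one SA on which $\Pmax{\set{\cmark}}{\Sched^\mathit{hist}_{\ell,v,e}} > \Pmax{\set{\cmark}}{\Sched^\mathit{hist}_{\ell,v,o}}$. I would use a gadget in the spirit of $M_0$ whose optimal resolution depends on an \emph{absolute} expiration value rather than on a pairwise comparison. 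Take clocks $x$ with $F(x) = \textsc{Uni}(0,1)$ and a ``ruler'' $z$ with $F(z) = \Dirac{1/2}$, locations $\ell_0,\ell_1,\ell_2,\ell_3$, and targets \cmark, \xmark. Let $\ell_0 \to \ell_1$ be immediate (empty guard) and restart $x$ only; at $\ell_1$ put a nondeterministic choice between two immediate edges $a\colon\ell_1\to\ell_2$ and $b\colon\ell_1\to\ell_3$, each restarting $z$; in $\ell_2$ let the edge guarded by $\set{x}$ go to \cmark and the one guarded by $\set{z}$ to \xmark, and swap these targets in $\ell_3$. Since no time elapses before $\ell_2$ or $\ell_3$ is entered and $z$ is freshly restarted there, the race in $\ell_2$ reaches \cmark exactly when $x$ expires before $z$, \ie iff $e(x) < \tfrac12$, and symmetrically $\ell_3$ reaches \cmark iff $e(x) > \tfrac12$.

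At $\ell_1$ the value $e(x)$ has already been sampled on the $\ell_0\to\ell_1$ edge, so an exact-information scheduler observes it and chooses $a$ when $e(x) < \tfrac12$ and $b$ otherwise, reaching \cmark with probability $1$. An order-information scheduler instead sees at $\ell_1$ only $v = \zeroval$ and the single pairwise relation between the not-yet-expired $x$ (remaining time $e(x) > 0$) and the already-expired ruler $z$ (remaining time $0$); this relation is the same for \emph{every} $e(x) \in (0,1)$, and nothing earlier in the history depends on $e(x)$ either. Hence its choice at $\ell_1$ is one fixed distribution over $\set{a,b}$, say $a$ with probability $q$; since $\{e(x) < \tfrac12\}$ and $\{e(x) > \tfrac12\}$ each have probability $\tfrac12$ and are rewarded by opposite actions, it attains $q\cdot\tfrac12 + (1-q)\cdot\tfrac12 = \tfrac12$ regardless of $q$. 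This yields the strict gap $1 > \tfrac12$.

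The step I expect to be delicate is exactly the one the gadget is engineered around: the threshold test against $\tfrac12$ must be usable \emph{after} the choice yet invisible \emph{before} it. Were $z$ already running at $\ell_1$ (say restarted together with $x$ at $\ell_0$), the order scheduler would see the relation between $e(x)$ and $e(z) = \tfrac12$ and could decide optimally, collapsing the separation; keeping $z$ expired at $\ell_1$ and restarting it only on the committing edges hides the threshold while still realising it in the subsequent race, and the urgency of the $\ell_1$-edges forces the decision to be made at time $0$, before $x$ can betray its expiration time. To meet the non-Zeno requirement I would finally attach a delay self-loop on a fresh auxiliary clock to \cmark and \xmark, a routine addition that leaves reachability of \cmark unchanged.
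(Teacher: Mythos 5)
Your proposal is correct and takes essentially the same route as the paper's proof: the paper's distinguishing SA $M_1$ has exactly your gadget structure---$x$ sampled on the edge into $\ell_1$, the comparison clock left expired at the decision point and restarted only on the two committing edges, so that the expiration order visible to $\Sched^\mathit{hist}_{\ell,v,o}$ is constant there and caps its probability at $\frac{1}{2}$, while $\Sched^\mathit{hist}_{\ell,v,e}$ decides by thresholding $e(x)$ against $\frac{1}{2}$. The only difference is that the paper's second clock is $\textsc{Uni}(0,1)$ rather than your deterministic ruler at $\frac{1}{2}$, yielding the gap $\frac{3}{4} > \frac{1}{2}$ instead of your cleaner $1 > \frac{1}{2}$.
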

\begin{proof}\proofnegspace{}
Consider the SA~$M_1$ in \Cref{fig:Cex2}.
Note that the history does not provide any information for making the choice in $\ell_1$: we always arrive after having spent zero time in $\ell_0$ and then having taken the single edge to $\ell_1$.
We can analytically determine that $\Pmax{\Sched^\mathit{hist}_{\ell,v,e}}{} = \frac{3}{4}$
by going from $\ell_1$ to $\ell_2$ if $e(x) \leq \frac{1}{2}$ and to $\ell_3$ otherwise.
We would obtain a probability equal to $\frac{1}{2}$ by always going to either $\ell_2$ or $\ell_3$ or by picking either edge with equal probability.
This is the best we can do if $e$ is not visible, and thus $\Pmax{\Sched^\mathit{hist}_{\ell,v,o}}{} = \frac{1}{2}$:
in $\ell_1$, $v(x) = v(y) = 0$ and the expiration order is always ``$y$ before $x$'' because $y$ has not yet been started.
\end{proof}
Just like for MDP and unbounded reachability probabilities, the classic history-dependent and memoryless schedulers with complete information are equivalent:

\begin{proposition}
\label{prop:ClassicHistMemorylessEquiv}
$\Sched^\mathit{hist}_{\ell,v,e} \schedeq \Sched^\mathit{ml}_{\ell,v,e}$.
\end{proposition}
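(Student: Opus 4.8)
The relation $\Sched^\mathit{hist}_{\ell,v,e} \schedgeq \Sched^\mathit{ml}_{\ell,v,e}$ holds by the stated convention, since $\Sched^\mathit{ml}_{\ell,v,e} \subseteq \Sched^\mathit{hist}_{\ell,v,e}$ (a memoryless scheduler is just a history-dependent one that ignores all but the last state), so the supremum defining $p_{\max}$ over the larger class is at least as large. The plan is therefore to prove the nontrivial reverse direction $\Sched^\mathit{ml}_{\ell,v,e} \schedgeq \Sched^\mathit{hist}_{\ell,v,e}$, and, as we argue only about the maximum, it suffices to show $\Pmax{G}{\Sched^\mathit{ml}_{\ell,v,e}} \geq \Pmax{G}{\Sched^\mathit{hist}_{\ell,v,e}}$ for every SA and every goal set $G$. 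The starting point is the Markov property of $\sem{M}$: by \Cref{def:SASemantics} the transitions $\Trans_M(s)$ and their measures depend only on the current state $s = \tuple{\ell,v,e}$, never on the history. Hence the optimal reachability probability from a state is itself a function of the state alone, and I would define the value function $V^*(s) \defeq \sup_{\sched\in\Sched^\mathit{hist}_{\ell,v,e}} \mathbb{P}_\sched(\Pi_{\Reach_G} \text{ started in } s)$, noting $V^*(\InitialState) = \Pmax{G}{\Sched^\mathit{hist}_{\ell,v,e}}$.

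Next I would establish that $V^*$ is the least fixed point of the Bellman operator: $V^*(s) = 1$ for $s \in \Reach_G$ and $V^*(s) = \sup_{\tuple{a,\mu}\in\Trans_M(s)} \int_\States V^*\,d\mu$ otherwise, which is the standard characterization of maximal reachability for a (measurable) Markov decision process and is exactly where the Markov property is used to make the right-hand side depend only on $s$. A decisive simplification here is that $\sem{M}$ is \emph{finitely} nondeterministic and that, by \Cref{def:TPTS}, any state admitting a delay is deterministic; thus nondeterminism occurs only at states offering finitely many jump transitions. This reduces the selection step to a finite comparison: for each $\epsilon > 0$ and each non-goal state $s$ one picks a transition $\tuple{a_s,\mu_s}\in\Trans_M(s)$ whose continuation value $\int V^*\,d\mu_s$ lies within $\epsilon$ of the supremum, and doing so for every $s$ defines a memoryless scheduler $\sched^{ml}_\epsilon\in\Sched^\mathit{ml}_{\ell,v,e}$. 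I would then show $\mathbb{P}_{\sched^{ml}_\epsilon}(\Pi_{\Reach_G}) \to V^*(\InitialState)$ as $\epsilon\to 0$ via the usual $\epsilon$-greedy argument for maximal reachability, using that values lie in $[0,1]$ and that non-Zenoness guarantees the run genuinely progresses so that no probability mass is ``lost at infinity.'' Taking the supremum over $\sched^{ml}_\epsilon$ then yields $\Pmax{G}{\Sched^\mathit{ml}_{\ell,v,e}} \geq V^*(\InitialState)$, closing the chain.

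The main obstacle is not the selection itself (finite branching makes it essentially a measurable $\mathrm{argmax}$ over finitely many value integrals) but the measure-theoretic bookkeeping in the uncountable state space: I must argue that $V^*$ is a measurable function of $s$, that the integrals $\int V^*\,d\mu$ are well defined, and that the chosen $\sched^{ml}_\epsilon$ is a genuine \emph{measurable} scheduler in the sense of \Cref{def:Scheduler}. For this I would lean on the characterization of $\sem{M}$ as a nondeterministic labelled Markov process (noted just after \Cref{def:SASemantics}), which supplies the measurability framework, together with a measurable-selection theorem specialized to the finite-branching case. The secondary subtlety is that the supremum in the Bellman equation need not be attained, which is precisely why I work with $\epsilon$-optimal memoryless schedulers and pass to the limit rather than claiming a single optimal memoryless scheduler exists.
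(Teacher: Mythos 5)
Your high-level route is the same one the paper takes: view $\sem{M}$ as a finitely-branching continuous-state MDP, use the Markov property to characterise the optimal value by a Bellman equation, induce a memoryless scheduler by per-state action selection, and treat measurability as the residual technical obstacle. (The paper resolves measurability via the SA-specific fact that optimal decisions are constant on intervals of clock values and expiration times, citing the arguments of \cite{BBD03,KNSS00}, where you invoke the NLMP view and generic measurable selection; that difference is minor.)

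The genuine gap is in your selection-and-limit step. For undiscounted reachability, picking in every state a transition whose continuation value is within $\epsilon$ of (or even equal to) the Bellman supremum does \emph{not} yield schedulers whose values converge to $V^*$: a transition can be exactly Bellman-optimal precisely because it postpones all progress forever. Concretely, \Cref{def:SA} allows countably many locations, so consider $\ell_0, \ell_1, \ldots$ where from $\ell_i$ one edge waits on a freshly restarted $\textsc{Uni}(0,1)$ clock and moves to $\ell_{i+1}$, while a second edge jumps to the goal with probability $1-2^{-i}$ and to a sink otherwise. Then $V^* \equiv 1$ and the ``advance'' edge is the \emph{unique} Bellman-optimal choice in every $\ell_i$, yet the memoryless scheduler that always advances is non-Zeno (time diverges almost surely) and reaches the goal with probability $0$. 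So finite branching does not help (it only makes the per-state supremum attained, which means your $\epsilon$ is aimed at the wrong non-attainment: it is the supremum over \emph{schedulers}, not over \emph{actions}, that may fail to be attained), and non-Zenoness does not help either, since progress in time is not progress toward the goal. This pitfall is exactly what the finite-MDP proof the paper leans on (\cite{BK08}, Lemma 10.102) must circumvent, and its fix exploits finiteness of the state space; in an infinite-state setting one needs either an Ornstein-style construction of uniformly $\epsilon$-optimal stationary schedulers, or an argument that the SA semantics collapses to finitely many decision regions so that finite-state reasoning applies --- which is in effect what the paper's measurability remark is gesturing at. In fairness, the paper's own proof sketch is thin at this very point, but your proposal makes the failing step explicit and rests the whole reverse inequality on it.
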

\begingroup
\renewcommand*{\proofname}{Proof sketch}
\begin{proof}\proofnegspace{}
Our definition of TPTS only allows finite nondeterministic choices, \ie we have a very restricted form of continuous-space MDP.
We can thus adapt the argument of the corresponding proof for MDP~\cite[Lemma 10.102]{BK08}:
For each state (of possibly countably many), we construct a notional optimal memoryless (and deterministic) scheduler in the same way, replacing the summation by an integration for the continuous measures in the transition function.
It remains to show that this scheduler is indeed measurable.
For TPTS that are the semantics of SA, this follows from the way clock values are used in the guard sets so that optimal decisions are constant over intervals of clock values and expiration times (see \eg the arguments in~\cite{BBD03} or \cite{KNSS00}).
\end{proof}
\endgroup\noindent
On the other hand, when restricting schedulers to see the expiration order only, history-dependent and memoryless schedulers are no longer equivalent:

\begin{figure}[t]
\begin{minipage}[b]{0.33\textwidth}
\centering
\begin{tikzpicture}[on grid,auto]
  \node[state] (l0) {$\ell_0$};
  \coordinate[left=0.3 of l0.west] (start);
  \node[] (me) [above left=0.4 and 1.1 of l0] {\small$M_1$:};
  \node[] (distr) [above right=0.2 and 1.3 of l0,align=left] {$x\colon \textsc{Uni}(0, 1)$\\$y\colon \textsc{Uni}(0, 1)$};
  \node[state] (l1) [below=1 of l0] {$\ell_1$};
  \node[state] (l2) [below left=0.875 and 0.75 of l1] {$\ell_2$};
  \node[state] (l3) [below right=0.875 and 0.75 of l1] {$\ell_3$};
  \node[state] (yes) [below=1 of l2] {\cmark};
  \node[state] (no) [below=1 of l3] {\xmark};
  ;
  \path[->]
    (start) edge node {} (l0)
    (l0) edge node[right,pos=0.25,inner sep=0.5mm] {\strut$\varnothing$} node[right,pos=0.7,inner sep=0.5mm] {\strut$\text{\restart}(\{ x \})$} (l1)
    (l1) edge[] node[left,pos=0.15,inner sep=1mm] {\strut$\varnothing\!$} node[left,pos=0.55,inner sep=1mm] {\strut$\text{\restart}(\{ y \})$} (l2)
    (l1) edge[] node[right,pos=0.15,inner sep=1mm] {\strut$\varnothing$} node[right,pos=0.55,inner sep=1mm] {\strut$\text{\restart}(\{ y \})$} (l3)
    (l2) edge node[left,pos=0.25,inner sep=0.5mm] {\strut$\{ x \}$} (yes)
    (l2) edge[bend right=20] node[above,pos=0.33,inner sep=2mm] {\strut$\{ y \}$} (no)
    (l3) edge[bend left=20] node[above,pos=0.33,inner sep=2mm] {\strut$\{ y \}$} (yes)
    (l3) edge node[right,pos=0.25,inner sep=0.5mm] {\strut$\{ x \}$} (no)
  ;
\end{tikzpicture}
\caption{SA $M_1$}
\label{fig:Cex2}
\end{minipage}%
\begin{minipage}[b]{0.33\textwidth}
\centering
\begin{tikzpicture}[on grid,auto]
  \node[state] (l0) {$\ell_0$};
  \coordinate[left=0.3 of l0.west] (start);
  \node[] (me) [above left=0.4 and 1.1 of l0] {\small$M_2$:};
  \node[] (distr) [above right=0.2 and 1.3 of l0,align=left] {$x\colon \textsc{Uni}(0, 8)$\\$y\colon \textsc{Uni}(0, 1)$\\$z\colon \textsc{Uni}(0, 4)$};
  \node[state] (l1) [below=1 of l0] {$\ell_1$};
  \node[state] (l2) [below=1 of l1] {$\ell_2$};
  \node[state] (l3) [below left=0.875 and 0.75 of l2] {$\ell_3$};
  \node[state] (l4) [below right=0.875 and 0.75 of l2] {$\ell_4$};
  \node[state] (yes) [below=1 of l3] {\cmark};
  \node[state] (no) [below=1 of l4] {\xmark};
  ;
  \path[->]
    (start) edge node {} (l0)
    (l0) edge node[right,pos=0.25,inner sep=0.5mm] {\strut$\varnothing$} node[right,pos=0.7,inner sep=0.5mm] {\strut$\text{\restart}(\{ x, z \})$} (l1)
    (l1) edge[bend right=50] node[left,pos=0.25,inner sep=0.5mm] {\strut$\{ x \}$} node[left,pos=0.7,inner sep=0.5mm] {\strut$\text{\restart}(\{z \})$} (l2)
    (l1) edge[bend left=50] node[right,pos=0.25,inner sep=0.5mm] {\strut$\{z \}$} node[right,pos=0.7,inner sep=0.75mm] {\strut$\text{\restart}(\{z \})$} (l2)
    (l2) edge[] node[left,pos=0.15,inner sep=1mm] {\strut$\varnothing\!$} node[left,pos=0.55,inner sep=1mm] {\strut$\text{\restart}(\{ y \})$} (l3)
    (l2) edge[] node[right,pos=0.15,inner sep=1mm] {\strut$\varnothing$} node[right,pos=0.55,inner sep=1mm] {\strut$\text{\restart}(\{ y \})$} (l4)
    (l3) edge node[left,pos=0.25,inner sep=0.5mm] {\strut$\{ x \}$} (yes)
    (l3) edge[bend right=20] node[above,pos=0.33,inner sep=2mm] {\strut$\{ y \}$} (no)
    (l4) edge[bend left=20] node[above,pos=0.33,inner sep=2mm] {\strut$\{ y \}$} (yes)
    (l4) edge node[right,pos=0.25,inner sep=0.5mm] {\strut$\{ x \}$} (no)
  ;
\end{tikzpicture}
\caption{SA $M_2$}
\label{fig:CexX}
\end{minipage}%
\begin{minipage}[b]{0.33\textwidth}
\centering
\begin{tikzpicture}[on grid,auto]
  \node[state] (l0) {$\ell_0$};
  \coordinate[left=0.3 of l0.west] (start);
  \node[] (me) [above left=0.4 and 1.1 of l0] {\small$M_3$:};
  \node[] (distr) [above right=0.2 and 1.3 of l0,align=left] {$x\colon \textsc{Uni}(0, 1)$\\$y\colon \textsc{Uni}(0, 1)$\\$z\colon \textsc{Uni}(0, 1)$};
  \node[state] (l1) [below=1 of l0] {$\ell_1$};
  \node[state] (l2) [below=1 of l1] {$\ell_2$};
  \node[state] (l3) [below=1 of l2] {$\ell_3$};
  \node[state] (l4) [below left=0.875 and 0.75 of l3] {$\ell_4$};
  \node[state] (l5) [below right=0.875 and 0.75 of l3] {$\ell_5$};
  \node[state] (yes) [below=1 of l4] {\cmark};
  \node[state] (no) [below=1 of l5] {\xmark};
  ;
  \path[->]
    (start) edge node {} (l0)
    (l0) edge node[right,pos=0.25,inner sep=0.5mm] {\strut$\varnothing$} node[right,pos=0.7,inner sep=0.5mm] {\strut$\text{\restart}(\{ z \})$} (l1)
    (l1) edge node[right,pos=0.25,inner sep=0.5mm] {\strut$\{ z \}$} node[right,pos=0.7,inner sep=0.5mm] {\strut$\text{\restart}(\{ x, y, z \})$} (l2)
    (l2) edge[bend right=50] node[left,pos=0.25,inner sep=0.5mm] {\strut$\{ x \}$} node[left,pos=0.7,inner sep=0.5mm] {\strut$\text{\restart}(\{ y \})$} (l3)
    (l2) edge[bend left=50] node[right,pos=0.25,inner sep=0.5mm] {\strut$\{ y \}$} node[right,pos=0.7,inner sep=0.75mm] {\strut$\text{\restart}(\{ x \})$} (l3)
    (l3) edge[] node[left,pos=0.15,inner sep=1mm] {\strut$\varnothing\!$} (l4)
    (l3) edge[] node[right,pos=0.15,inner sep=1mm] {\strut$\varnothing$} (l5)
    (l4) edge node[left,pos=0.25,inner sep=0.5mm] {\strut$\{ x \}$} (yes)
    (l4) edge[bend right=20] node[above,pos=0.33,inner sep=2mm] {\strut$\{ y \}$} (no)
    (l5) edge[bend left=20] node[above,pos=0.33,inner sep=2mm] {\strut$\{ y \}$} (yes)
    (l5) edge node[right,pos=0.25,inner sep=0.5mm] {\strut$\{ x \}$} (no)
  ;
\end{tikzpicture}
\caption{SA $M_3$}
\label{fig:Cex3}
\end{minipage}%
\end{figure}

\begingroup
\begin{proposition}
\label{prop:OrderedHistVsMemoryless}
$\Sched^\mathit{hist}_{\ell,v,o} \schedgtr \Sched^\mathit{ml}_{\ell,v,o}$.
\end{proposition}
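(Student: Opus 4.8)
We first note that $\Sched^\mathit{hist}_{\ell,v,o} \schedgeq \Sched^\mathit{ml}_{\ell,v,o}$ is immediate, since a history-dependent scheduler may ignore the history, so it remains only to exhibit an SA witnessing strictness. The plan is to use $M_2$ of \Cref{fig:CexX}, and the first step is to understand its dynamics. Because the two edges out of $\ell_1$ carry \emph{distinct} singleton guards $\{x\}$ and $\{z\}$ and all edges are urgent, the choice in $\ell_1$ is in fact resolved by the clocks rather than by the scheduler: only the edge whose clock expires first can be taken, after a delay of $\min(e(x),e(z))$. Hence on entering $\ell_2$ the residual time of $x$ is $r_x \defeq (e(x)-e(z))^+$, which is $0$ exactly when $x$ expired before $z$ (an event of probability $\tfrac{1}{4}$ for the stated distributions) and strictly positive otherwise. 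The single genuine nondeterministic choice is then the one in $\ell_2$: going to $\ell_3$ reaches \cmark iff $x$ beats a freshly sampled $y \sim \textsc{Uni}(0,1)$, \ie with probability $(1-r_x)^+$, whereas going to $\ell_4$ reaches \cmark with probability $\min(r_x,1)$.

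Second, I would compute the history-dependent optimum. A scheduler in $\Sched^\mathit{hist}_{\ell,v,o}$ can read off from the action/delay history which edge fired in $\ell_1$, and thus knows whether $r_x = 0$. Choosing $\ell_3$ in the ``$x$-first'' branch (won with certainty) and $\ell_4$ in the ``$z$-first'' branch, and integrating the residual-time density of $r_x$, yields $\Pmax{\Sched^\mathit{hist}_{\ell,v,o}}{} \geq \tfrac{15}{16}$, the summand $\tfrac{1}{4}$ being the contribution of the $x$-first branch.

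Third, and this is where the real work lies, I would bound the memoryless optimum from above. At $\ell_2$ a scheduler in $\Sched^\mathit{ml}_{\ell,v,o}$ observes only $v$ --- which reveals $m \defeq \min(e(x),e(z)) \in (0,4)$ but not which of the two clocks attained it --- together with the expiration order, \ie the sign of $r_x - r_z$ for the freshly restarted $z$. The crucial point is that neither observation separates the $x$-first event (on which $\ell_3$ is perfectly optimal) from the $z$-first events: both produce the same range of values of $m$ and overlapping order relations. I would therefore express the optimal memoryless value as the integral, over the observation, of the larger of the two conditional expected rewards under the posterior distribution of $r_x$, maximised over all measurable decision rules (deterministic without loss of generality, as the reward is affine in any randomisation). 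The main obstacle is to carry out this optimisation and show its value is strictly below $\tfrac{15}{16}$: intuitively, the $x$-first states all share their observation $(m,\,\text{sign}(r_x-r_z))$ with $z$-first states of large $r_x$, so any rule that captures the full value of the $x$-first branch must choose $\ell_3$ on those shared observations, where $\ell_3$ is badly suboptimal, and the optimisation confirms that a strictly positive amount of probability is unavoidably lost. Establishing this gap gives $\Pmax{\Sched^\mathit{hist}_{\ell,v,o}}{} > \Pmax{\Sched^\mathit{ml}_{\ell,v,o}}{}$, and hence $\Sched^\mathit{hist}_{\ell,v,o} \schedgtr \Sched^\mathit{ml}_{\ell,v,o}$.
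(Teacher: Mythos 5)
Your reduction of $M_2$'s dynamics is correct and matches the paper's intent: the race at $\ell_1$ resolves the first choice, $r_x=(e(x)-e(z))^+$ is zero exactly on the probability-$\frac{1}{4}$ ``$x$-first'' event, the only real choice is at $\ell_2$ with win probabilities $(1-r_x)^+$ and $\min(r_x,1)$, and your history-dependent strategy (play $\ell_3$ after the $x$-first edge, $\ell_4$ after the $z$-first edge) indeed achieves $\frac{1}{4}+\frac{11}{16}=\frac{15}{16}$, which is in fact the history optimum. The genuine gap is your third step: the whole proposition rests on the strict bound $\Pmax{\Sched^\mathit{ml}_{\ell,v,o}}{}<\frac{15}{16}$, and you never establish it --- you set up the posterior optimisation and then assert that ``the optimisation confirms'' a positive loss. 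This cannot be waved through qualitatively, because the strictness is a quantitative fact about the particular distributions of $M_2$, not a structural consequence of observations being shared. Concretely: if, on the $z$-first mass sharing an observation $(m,\,x\prec z)$, going to $\ell_3$ were conditionally at least as good as going to $\ell_4$ (which happens for other parameters, \eg if $z$'s distribution were concentrated near $0$, forcing $r_x$ to be small on that event), then the memoryless rule ``$\ell_3$ on $x\prec z$, $\ell_4$ on $z\prec x$'' would collect the full $x$-first value at no cost and tie with your history bound, and $M_2$ would not separate the classes at all. So the step you defer is exactly the step that carries the proposition.

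The step is closable with a short computation. Per unit $dm$ of observations $(m,\,x\prec z)$, the $x$-first mass has density $a(m)=\frac{4-m}{32}$ and yields $1$ under $\ell_3$, $0$ under $\ell_4$, while the $z$-first mass contributes $\int_0^1\frac{(4-r)(1-r)}{128}\,dr=\frac{11}{768}$ under $\ell_3$ and $\int_0^4\frac{(4-r)\min(r,1)}{128}\,dr=\frac{37}{768}$ under $\ell_4$. Since $\frac{37}{768}>\frac{11}{768}$, every measurable memoryless rule loses at least $\int_0^4\min\bigl(a(m),\tfrac{26}{768}\bigr)\,dm=\frac{1079}{9216}>0$ against the history value; the pointwise optimum is the threshold rule ``$\ell_3$ iff $m<\frac{35}{12}$'', giving $\Pmax{\Sched^\mathit{ml}_{\ell,v,o}}{}=\frac{7561}{9216}\approx 0.820<\frac{15}{16}$. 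It is worth noting that the paper takes a different route that avoids computing either optimum: it evaluates two explicit memoryless schedulers --- the best one that never plays $\ell_4$ on ``$x$ before $z$'' observations, worth $\frac{77}{96}$, and the threshold scheduler, worth $\frac{7561}{9216}>\frac{77}{96}$ --- to conclude that the \emph{optimal} memoryless scheduler must play $\ell_4$ on a positive-measure set of ``$x$ before $z$'' observations, and then lets a history-dependent scheduler copy that optimum except after the $x$-first edge, where it plays $\ell_3$ and strictly gains. Your route, once completed, yields more information (both optima exactly), but the computation is precisely the part that is missing.
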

\begin{proof}\proofnegspace{}
Consider the SA~$M_2$ in \Cref{fig:CexX}.
Let $\sched^{\mathit{opt}}_{\mathit{ml}(l,v,o)}$ be the (unknown) optimal scheduler in $\Sched^\mathit{ml}_{\ell,v,o}$ \wrt the max.\ probability of reaching \cmark.
Define $\sched^{\mathit{better}}_{\mathit{hist}(l,v,o)} \in \Sched^\mathit{hist}_{\ell,v,o}$ as:
When in $\ell_2$ and the last edge in the history is the left one (\ie $x$ is expired), go to $\ell_3$; otherwise, behave like $\sched^{\mathit{opt}}_{\mathit{ml}(l,v,o)}$.
This scheduler distinguishes $\Sched^\mathit{hist}_{\ell,v,o}$ and $\Sched^\mathit{ml}_{\ell,v,o}$ (by achieving a strictly higher max.\ probability than $\sched^{\mathit{opt}}_{\mathit{ml}(l,v,o)}$) if and only if there are some combinations of clock values (aspect $v$) and expiration orders (aspect $o$) in $\ell_2$ that can be reached with positive probability via the left edge into $\ell_2$, for which $\sched^{\mathit{opt}}_{\mathit{ml}(l,v,o)}$ must nevertheless decide to go to $\ell_4$.

All possible clock valuations in $\ell_2$ can be achieved via either the left or right edges, but taking the left edge implies that $x$ expires before $z$ in $\ell_2$.
It is thus sufficient to show that $\sched^{\mathit{opt}}_{\mathit{ml}(l,v,o)}$ must go to $\ell_4$ in some cases where $x$ expires before $z$.
Now consider 
the scheduler that goes to $\ell_3$ whenever $x$ expires before $z$.
Under this scheduler, 
the max.\ probability is $\frac{77}{96} = 0.80208\bar3$.
Since this is the only scheduler in $\Sched^\mathit{ml}_{\ell,v,o}$ that never goes to $l_4$ when $x$ expires before $z$, it remains to show that the max.\ probability under $\sched^{\mathit{opt}}_{\mathit{ml}(l,v,o)}$ is greater than $\frac{77}{96}$.
Now consider the scheduler that goes to $\ell_3$ whenever $(x\text{ expires before }z) \wedge v(x) < \frac{35}{12}$.
Under this scheduler, we have a max.\ probability of $\frac{7561}{9216} \approx 0.820421$.
Thus $\sched^{\mathit{opt}}_{\mathit{ml}(l,v,o)}$ must sometimes go to $l_4$ even when the left edge was taken, so $\sched^{\mathit{better}}_{\mathit{hist}(l,v,o)}$ achieves a higher max.\ probability and thus distinguishes the scheduler classes.
\end{proof}
\endgroup\noindent
Knowing only the global elapsed time is less powerful than knowing the full history or the values of all clocks:

\begin{proposition}
\label{prop:ClassicValVsTime}
$\Sched^\mathit{hist}_{\ell,t,e} \schedgtr \Sched^\mathit{ml}_{\ell,t,e}$ and $\Sched^\mathit{ml}_{\ell,v,e} \schedgtr \Sched^\mathit{ml}_{\ell,t,e}$.
\end{proposition}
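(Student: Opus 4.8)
I will establish the two ``\schedgeq'' directions cheaply and then exhibit a single distinguishing SA, namely $M_3$ of \Cref{fig:Cex3}, for both strictness statements. The first ``\schedgeq'' is immediate: a history-dependent scheduler with the $(\ell,t,e)$-view can ignore its history and mimic any memoryless one, so $\Sched^\mathit{hist}_{\ell,t,e} \schedgeq \Sched^\mathit{ml}_{\ell,t,e}$. For the second I will \emph{not} argue by raw information inclusion (the total time $t$ is in general not recoverable from $v$, since clocks get reset); instead I chain the earlier equivalences $\Sched^\mathit{ml}_{\ell,v,e} \schedeq \Sched^\mathit{hist}_{\ell,v,e} \schedeq \Sched^\mathit{hist}_{\ell,t,e}$ from \Cref{prop:ClassicHistMemorylessEquiv,prop:EquivalenceClassicHist} and combine with $\Sched^\mathit{hist}_{\ell,t,e} \schedgeq \Sched^\mathit{ml}_{\ell,t,e}$. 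In particular $\Sched^\mathit{ml}_{\ell,v,e} \schedeq \Sched^\mathit{hist}_{\ell,t,e}$, so once the first claim is shown strict, the second follows by substituting equivalent classes; it therefore suffices to find one SA on which $\Sched^\mathit{hist}_{\ell,t,e}$ (equivalently $\Sched^\mathit{ml}_{\ell,v,e}$) strictly beats $\Sched^\mathit{ml}_{\ell,t,e}$.

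\textbf{The distinguishing automaton.} For strictness I use $M_3$. The only genuine nondeterministic choice is in $\ell_3$, between $\ell_4$ and $\ell_5$; from $\ell_4$ one reaches \cmark{} iff $x$ expires before $y$, and from $\ell_5$ iff $y$ expires before $x$. The point of $M_3$ is that the clock race at $\ell_2$ (whose two edges are guarded by $\{x\}$ and $\{y\}$) forces exactly one of $x,y$ to be \emph{already expired} on entering $\ell_3$, while the edge taken there restarts the other clock. Thus after the left edge $x$ is expired and $y$ is fresh, making $\ell_4$ a certain win, and after the right edge $y$ is expired and $x$ fresh, making $\ell_5$ a certain win. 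Any scheduler that can tell which clock is expired reaches \cmark{} with probability~$1$, and both full-information classes can: $\Sched^\mathit{hist}_{\ell,t,e}$ reads the $\ell_2$-edge off the action recorded in its history, and $\Sched^\mathit{ml}_{\ell,v,e}$ simply tests the predicate $v(c)\ge e(c)$ for $c\in\{x,y\}$. Hence $\Pmax{\Sched^\mathit{hist}_{\ell,t,e}}{} = \Pmax{\Sched^\mathit{ml}_{\ell,v,e}}{} = 1$; I will also note that exactly one of $x,y$ is expired at $\ell_3$ almost surely, so these choices are always well-defined.

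\textbf{The time-only bound.} It remains to prove $\Pmax{\Sched^\mathit{ml}_{\ell,t,e}}{} < 1$, i.e.\ that seeing only $(\ell,t,e)$ does not reveal which clock is expired at $\ell_3$. The plan is to compute the two measures induced on the observation space of triples $(\ell_3,t,e)$ by the left- and right-branch $\ell_3$-states and to show their Lebesgue densities are \emph{simultaneously} positive on a set of positive measure. Writing $Z_1$ for the first sampled value of $z$ and tracking the restarts, a left-branch state has $e(x)$ equal to the expired value of $x$ and $t = Z_1 + e(x)$ (so $t-e(x)\in(0,1)$), with $e(y),e(z)$ fresh independent uniforms; the right branch is its mirror image under $x\leftrightarrow y$. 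A short density computation then shows both branches contribute strictly positive density at, e.g., $e(x)=e(y)=e(z)=\tfrac12$ and $t=\tfrac7{10}$, so the event ``$x$ is the expired clock'' is not a function of the observation. On this positive-measure set of observations both configurations occur with positive conditional probability, yet they demand opposite choices ($\ell_4$ versus $\ell_5$), so every $(\ell,t,e)$-scheduler—even a randomising one choosing $\ell_4$ with probability $q(o)$—achieves conditional success $q(o)\,\mathbb{P}(\mathrm{left}\mid o)+(1-q(o))\,\mathbb{P}(\mathrm{right}\mid o)\le \max(\mathbb{P}(\mathrm{left}\mid o),\mathbb{P}(\mathrm{right}\mid o))<1$ there. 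Integrating yields $\Pmax{\Sched^\mathit{ml}_{\ell,t,e}}{}<1$ and hence both strict inequalities.

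\textbf{Main obstacle.} The hard part is the last step: turning ``the observation does not separate the two branches'' into a rigorous strict inequality. Concretely, I must verify that the push-forwards of the left- and right-branch reach measures onto $(t,e)$-space overlap on a set of positive measure—this is exactly the density computation above, and it is where one has to be careful about which clocks are reset on each edge and how $t$ is constituted in each branch—and then bound the attainable success by $\int \max(\mathbb{P}(\mathrm{left}\mid o),\mathbb{P}(\mathrm{right}\mid o))\,\mathrm{d}o < 1$, the strictness coming from the overlap region. The remaining items (measurability of the constructed optimal schedulers for the full-information classes, and the almost-sure fact that precisely one of $x,y$ is expired at $\ell_3$) are routine.
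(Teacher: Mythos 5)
Your proposal takes essentially the same route as the paper's proof sketch: the same distinguishing automaton $M_3$, with $\Pmax{\Sched^\mathit{hist}_{\ell,t,e}}{} = \Pmax{\Sched^\mathit{ml}_{\ell,v,e}}{} = 1$ obtained by reading the expired clock off the history resp.\ the clock values, and $\Pmax{\Sched^\mathit{ml}_{\ell,t,e}}{} < 1$ because the $(\ell,t,e)$-observation cannot (almost surely) separate the two branches entering $\ell_3$. Your two refinements---deriving $\Sched^\mathit{ml}_{\ell,v,e} \schedgeq \Sched^\mathit{ml}_{\ell,t,e}$ by chaining \Cref{prop:EquivalenceClassicHist} and \Cref{prop:ClassicHistMemorylessEquiv} instead of appealing to information inclusion (which indeed fails here, since $t$ is not recoverable from $v$), and replacing the paper's informal ``$t$ reveals no information'' by an explicit positive-measure overlap of the two branch densities with the bound $\int \max(\mathbb{P}(\mathrm{left}\mid o),\mathbb{P}(\mathrm{right}\mid o))\,\mathrm{d}o < 1$---are sound elaborations of steps the paper leaves at sketch level.
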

\begingroup
\renewcommand*{\proofname}{Proof sketch}
\begin{proof}\proofnegspace{}
Consider the SA~$M_3$ in \Cref{fig:Cex3}.
We have $\Pmax{\Sched^\mathit{hist}_{\ell,t,e}}{} = 1$:
when in $\ell_3$, the scheduler sees from the history which of the two incoming edges was used, and thus knows whether $x$ or $y$ is already expired.
It can then make the optimal choice: go to $\ell_4$ if $x$ is already expired, or to $\ell_5$ otherwise (if $y$ is already expired).
We also have $\Pmax{\Sched^\mathit{ml}_{\ell,v,e}}{} = 1$:
the scheduler sees that either $v(x) = 0$ or $v(y) = 0$, which implies that the other clock is already expired, and the argument above applies.
However, $\Pmax{\Sched^\mathit{ml}_{\ell,t,e}}{} < 1$:
the distribution of elapsed time $t$ on entering $\ell_3$ is independent of which edge is taken, so $t$ reveals no information about whether $x$ or $y$ has already expired.
Since $v$ is not visible, the expiration times in $e$ are not useful: they are both positive and drawn from the same distribution, but one unknown clock is expired.
The wait for $z$ in $\ell_1$ ensures that comparing $t$ with the expiration times in $e$ does not reveal any information.
\end{proof}
\endgroup
In the case of MDP, knowing the total elapsed time (\ie steps) does not make a difference for unbounded reachability.
Only for step-bounded properties is that extra knowledge necessary to achieve optimal probabilities.
With SA, however, it makes a difference even in the unbounded case: 

\begin{figure}[t]
\begin{minipage}[b]{0.33\textwidth}
\centering
\begin{tikzpicture}[on grid,auto]
  \node[state] (l0) {$\ell_0$};
  \coordinate[left=0.3 of l0.west] (start);
  \node[] (me) [above left=0.4 and 1.0 of l0] {\small$M_4$:};
  \node[] (distr) [above right=0.2 and 1.3 of l0,align=left] {$x\colon \textsc{Uni}(0, 2)$\\$y\colon \textsc{Uni}(0, 1)$\\$z\colon \textsc{Uni}(0, 2)$};
  \node[state] (l1) [below=1 of l0] {$\ell_1$};
  \node[state] (l2) [below=1 of l1] {$\ell_2$};
  \node[state] (l3) [below left=0.875 and 0.75 of l2] {$\ell_3$};
  \node[state] (l4) [below right=0.875 and 0.75 of l2] {$\ell_4$};
  \node[state] (yes) [below=1 of l3] {\cmark};
  \node[state] (no) [below=1 of l4] {\xmark};
  ;
  \path[->]
    (start) edge node {} (l0)
    (l0) edge node[right,pos=0.25,inner sep=0.5mm] {\strut$\varnothing$} node[right,pos=0.7,inner sep=0.5mm] {\strut$\text{\restart}(\{ x, z \})$} (l1)
    (l1) edge node[right,pos=0.25,inner sep=0.5mm] {\strut$\{ z \}$} node[right,pos=0.7,inner sep=0.5mm] {\strut$\text{\restart}(\{ y, z \})$} (l2)
    (l2) edge[] node[left,pos=0.15,inner sep=1mm] {\strut$\varnothing\!$} (l3)
    (l2) edge[] node[right,pos=0.15,inner sep=1mm] {\strut$\varnothing$} (l4)
    (l3) edge node[left,pos=0.25,inner sep=0.5mm] {\strut$\{ x \}$} (yes)
    (l3) edge[bend right=20] node[above,pos=0.33,inner sep=2mm] {\strut$\{ y \}$} (no)
    (l4) edge[bend left=20] node[above,pos=0.33,inner sep=2mm] {\strut$\{ y \}$} (yes)
    (l4) edge node[right,pos=0.25,inner sep=0.5mm] {\strut$\{ x \}$} (no)
  ;
\end{tikzpicture}
\caption{SA $M_4$}
\label{fig:Cex4}
\end{minipage}%
\begin{minipage}[b]{0.33\textwidth}
\centering
\begin{tikzpicture}[on grid,auto]
  \node[state] (l0) {$\ell_1$};
  \coordinate[left=0.3 of l0.west] (start);
  \node[] (me) [above left=0.4 and 1.1 of l0] {\small$M_5$:};
  \node[] (distr) [above right=0.2 and 1.3 of l0,align=left] {$x\colon \textsc{Uni}(0, 1)$\\$y\colon \textsc{Uni}(0, 1)$};
  \node[state] (l1) [below=1 of l0] {$\ell_2$};
  \node[state] (l2) [below=1 of l1] {$\ell_3$};
  \node[state] (l3) [below left=0.875 and 0.75 of l2] {$\ell_4$};
  \node[state] (l4) [below right=0.875 and 0.75 of l2] {$\ell_5$};
  \node[state] (yes) [below=1 of l3] {\cmark};
  \node[state] (no) [below=1 of l4] {\xmark};
  ;
  \path[->]
    (start) edge node {} (l0)
    (l0) edge node[right,pos=0.25,inner sep=0.5mm] {\strut$\varnothing$} node[right,pos=0.7,inner sep=0.5mm] {\strut$\text{\restart}(\{ x, y \})$} (l1)
    (l1) edge node[right,pos=0.25,inner sep=0.5mm] {\strut$\{ y \}$} node[right,pos=0.7,inner sep=0.5mm] {\strut$\text{\restart}(\{ y \})$} (l2)
    (l2) edge[] node[left,pos=0.15,inner sep=1mm] {\strut$\varnothing\!$} node[left,pos=0.55,inner sep=1mm] {\strut$\text{\restart}(\{ y \})$} (l3)
    (l2) edge[] node[right,pos=0.15,inner sep=1mm] {\strut$\varnothing$} node[right,pos=0.55,inner sep=1mm] {\strut$\text{\restart}(\{ y \})$} (l4)
    (l3) edge node[left,pos=0.25,inner sep=0.5mm] {\strut$\{ x \}$} (yes)
    (l3) edge[bend right=20] node[above,pos=0.33,inner sep=2mm] {\strut$\{ y \}$} (no)
    (l4) edge[bend left=20] node[above,pos=0.33,inner sep=2mm] {\strut$\{ y \}$} (yes)
    (l4) edge node[right,pos=0.25,inner sep=0.5mm] {\strut$\{ x \}$} (no)
  ;
\end{tikzpicture}
\caption{SA $M_5$}
\label{fig:CexY}
\end{minipage}%
\begin{minipage}[b]{0.33\textwidth}
\centering
\begin{tikzpicture}[on grid,auto]
  \node[state] (l0) {$\ell_0$};
  \coordinate[left=0.3 of l0.west] (start);
  \node[] (me) [above left=0.4 and 1.0 of l0] {\small$M_6$:};
  \node[] (distr) [above right=0.2 and 1.3 of l0,align=left] {$x\colon \textsc{Uni}(0, 1)$\\$y\colon \textsc{Uni}(0, 1)$};
  \node[state] (l1) [below=1 of l0] {$\ell_1$};
  \node[state] (l2) [below=1 of l1] {$\ell_2$};
  \node[state] (l3) [below left=0.875 and 0.75 of l2] {$\ell_3$};
  \node[state] (l4) [below right=0.875 and 0.75 of l2] {$\ell_4$};
  \node[state] (yes) [below=1 of l3] {\cmark};
  \node[state] (no) [below=1 of l4] {\xmark};
  ;
  \path[->]
    (start) edge node {} (l0)
    (l0) edge node[right,pos=0.25,inner sep=0.5mm] {\strut$\varnothing$} node[right,pos=0.7,inner sep=0.5mm] {\strut$\text{\restart}(\{ x, y \})$} (l1)
    (l1) edge[bend right=50] node[left,pos=0.25,inner sep=0.5mm] {\strut$\{ x \}$} (l2)
    (l1) edge[bend left=50] node[right,pos=0.25,inner sep=0.5mm] {\strut$\{ y \}$} (l2)
    (l2) edge[] node[left,pos=0.15,inner sep=1mm] {\strut$\varnothing\!$} (l3)
    (l2) edge[] node[right,pos=0.15,inner sep=1mm] {\strut$\varnothing$} (l4)
    (l3) edge node[left,pos=0.25,inner sep=0.5mm] {\strut$\{ x \}$} (yes)
    (l3) edge[bend right=20] node[above,pos=0.33,inner sep=2mm] {\strut$\{ y \}$} (no)
    (l4) edge[bend left=20] node[above,pos=0.33,inner sep=2mm] {\strut$\{ y \}$} (yes)
    (l4) edge node[right,pos=0.25,inner sep=0.5mm] {\strut$\{ x \}$} (no)
  ;
\end{tikzpicture}
\caption{SA $M_6$}
\label{fig:Cex1}
\end{minipage}
\end{figure}

\begin{proposition}
\label{prop:ClassicTimeVersusPureML}
$\Sched^\mathit{ml}_{\ell,t,e} \schedgtr \Sched^\mathit{ml}_{\ell,e}$.
\end{proposition}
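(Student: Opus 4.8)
The plan is to exhibit a distinguishing SA in which the global elapsed time $t$ encodes precisely the information that a location-based scheduler lacks. I would use $M_4$ of \Cref{fig:Cex4}. The direction $\Sched^\mathit{ml}_{\ell,t,e} \schedgeq \Sched^\mathit{ml}_{\ell,e}$ is immediate, since the information $(\ell,e)$ available to $\Sched^\mathit{ml}_{\ell,e}$ is contained in the information $(\ell,t,e)$ available to $\Sched^\mathit{ml}_{\ell,t,e}$; so only strictness needs a separate argument, which I establish on $M_4$ with goal $\set{\text{\cmark}}$.

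First I would trace the unique path to the nondeterministic choice at $\ell_2$. No time passes in $\ell_0$, so on entering $\ell_1$ we have $t = 0$, $v(x) = v(z) = 0$, and $e(x), e(z)$ freshly sampled from $\textsc{Uni}(0,2)$. The guard $\{z\}$ on the edge to $\ell_2$ forces a delay of exactly the sampled value of $z$, after which both $t$ and $v(x)$ equal that value; the edge then restarts $y$ (and $z$), sampling $e(y) \sim \textsc{Uni}(0,1)$ and a fresh independent $e(z)$, while leaving $e(x)$ and $v(x)$ untouched. The two crucial observations are: (i) at $\ell_2$ the remaining lifetime of $x$ is $e(x) - v(x) = e(x) - t$ whereas that of $y$ is $e(y)$; and (ii) the quantity $t = v(x)$ is invisible to a location-based scheduler and, being an independent $\textsc{Uni}(0,2)$ sample, stays $\textsc{Uni}(0,2)$-distributed conditional on the observed expiration times. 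The gadget $\ell_3/\ell_4$ is the one from $M_0$: going to $\ell_3$ reaches \cmark{} iff $x$ expires before $y$, and going to $\ell_4$ reaches \cmark{} iff $y$ expires before $x$.

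Next I would show $\Pmax{\Sched^\mathit{ml}_{\ell,t,e}}{} = 1$: since $v(x) = t$, a global-time scheduler can compute the remaining lifetime $e(x) - t$ of $x$, compare it with $e(y)$, and go to $\ell_3$ exactly when $e(x) - t < e(y)$ and to $\ell_4$ otherwise, thereby always selecting the winning branch (ties have probability $0$). Conversely, I would show $\Pmax{\Sched^\mathit{ml}_{\ell,e}}{} < 1$. A location-based scheduler observes only $(e(x),e(y))$ (the resampled $e(z)$ being independent noise), so its decision at $\ell_2$ is a measurable function of $(e(x),e(y))$, and the reachability probability factorises into an integral over these observations of the conditional success probability of the chosen branch. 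For observed $e(x) = a$, $e(y) = b$ with $a - b > 0$, going to $\ell_3$ succeeds with conditional probability $\mathbb{P}(t > a-b) = \frac{2-(a-b)}{2}$ and going to $\ell_4$ with $\mathbb{P}(t < a-b) = \frac{a-b}{2}$, both strictly between $0$ and $1$. As $\set{(a,b) \mid a - b > 0}$ has positive measure, the best location-based choice is strictly below $1$ on a positive-measure set, whence $\Pmax{\Sched^\mathit{ml}_{\ell,e}}{} < 1 = \Pmax{\Sched^\mathit{ml}_{\ell,t,e}}{}$, giving strictness.

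The main obstacle is the lower bound for $\Sched^\mathit{ml}_{\ell,e}$: one must argue that no location-based scheduler can recover $t$, i.e.\ that conditioning on the visible expiration times leaves $t = v(x)$ genuinely uncertain. This rests on the independence of the three clock samples together with the resampling of $z$ on entry to $\ell_2$, so that the only surviving trace of the elapsed time lives in the invisible clock value $v(x)$ rather than in any observable expiration time. Making this conditional-distribution claim rigorous, and justifying the pointwise-optimal factorisation of the reachability probability over observations, is the delicate part, though it is routine once the independence is spelled out.
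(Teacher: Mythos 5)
Your proposal is correct and follows essentially the same route as the paper's proof: it uses the same distinguishing automaton $M_4$, the same observation that a $\Sched^\mathit{ml}_{\ell,t,e}$ scheduler can compare the remaining lifetime $e(x)-t$ against $e(y)$ to achieve probability $1$, and the same key insight that restarting $z$ on entry to $\ell_2$ erases all observable trace of the elapsed time, so a $\Sched^\mathit{ml}_{\ell,e}$ scheduler cannot determine which clock expires first. The only difference is that you make the paper's qualitative ``cannot know with certainty'' step quantitative, by computing the conditional success probabilities $\frac{2-(a-b)}{2}$ and $\frac{a-b}{2}$ and integrating over observations, which is a sound elaboration of the same argument.
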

\begin{proof}\proofnegspace{}
Consider SA~$M_4$ in \Cref{fig:Cex4}.
We have $\Pmax{\Sched^\mathit{ml}_{\ell,t,e}}{} = 1$:
in $\ell_2$, the remaining time until $y$ expires is $e(y)$ and the remaining time until $x$ expires is $e(x) - t$ for the global time value $t$ as $\ell_2$ is entered.
The scheduler can observe all of these quantities and thus optimally go to $\ell_3$ if $x$ will expire first, or to $\ell_4$ otherwise.
However, $\Pmax{\Sched^\mathit{ml}_{\ell,e}}{} < 1$:
$e(x)$ only contains the absolute expiration time of $x$, but without knowing $t$ or the expiration time of $z$ in $\ell_1$, and thus the current value $v(x)$, this scheduler cannot know with certainty which of the clocks will expire first and is therefore unable to make an optimal choice in $\ell_2$.
\end{proof}
Finally, we need to compare the memoryless schedulers that see the clock expiration times with memoryless schedulers that see the expiration order.
As noted in \Cref{sec:ClassicSchedulers}, these two views of the current state are incomparable unless we also see the clock values:

\begin{proposition}
\label{prop:ClassicVsOrderVal}
$\Sched^\mathit{ml}_{\ell,v,e} \schedgtr \Sched^\mathit{ml}_{\ell,v,o}$.
\end{proposition}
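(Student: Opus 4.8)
The plan is to reuse the distinguishing automaton $M_1$ of \Cref{fig:Cex2}, which already established \Cref{prop:IncompClassicOrderedHist}. The direction $\Sched^\mathit{ml}_{\ell,v,e} \schedgeq \Sched^\mathit{ml}_{\ell,v,o}$ is immediate and needs no separate argument: once the clock values $v$ are visible, the exact expiration times $e$ determine the relative expiration order $o$, so every decision available to a scheduler in $\Sched^\mathit{ml}_{\ell,v,o}$ is also available in $\Sched^\mathit{ml}_{\ell,v,e}$. It remains only to exhibit a strict gap in the maximum reachability probability, and I would do this on $M_1$.

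The crucial feature of $M_1$ is that its history carries no usable information: every run enters $\ell_1$ through the unique edge from $\ell_0$ after zero elapsed time, so the sole nondeterministic choice---whether to move from $\ell_1$ to $\ell_2$ or to $\ell_3$---is always taken in a state with an identical history. Hence the memoryless classes attain exactly the same extremal probabilities on $M_1$ as their history-dependent counterparts, and I can import the values already computed for \Cref{prop:IncompClassicOrderedHist}.

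Concretely, I would record what each class observes in $\ell_1$: there $v(x) = v(y) = 0$, clock $x$ carries its freshly sampled expiration time $e(x) \sim \textsc{Uni}(0,1)$, and $y$ is still expired. A scheduler in $\Sched^\mathit{ml}_{\ell,v,e}$ therefore sees $e(x)$; since moving to $\ell_2$ then wins with conditional probability $1 - e(x)$ and moving to $\ell_3$ wins with probability $e(x)$, the measurable memoryless rule ``go to $\ell_2$ iff $e(x) \leq \frac{1}{2}$'' attains $\int_0^1 \max(1-t,\,t)\,dt = \frac{3}{4}$, giving $\Pmax{\Sched^\mathit{ml}_{\ell,v,e}}{} = \frac{3}{4}$. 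A scheduler in $\Sched^\mathit{ml}_{\ell,v,o}$ instead sees only $v$ and the expiration order, which in $\ell_1$ is invariably ``$y$ before $x$'' because $y$ is already expired; its choice is consequently a fixed (possibly randomised) distribution independent of $e(x)$, and any such choice yields $\frac{1}{2}$ because $\mathbb{E}[1 - e(x)] = \mathbb{E}[e(x)] = \frac{1}{2}$. Thus $\Pmax{\Sched^\mathit{ml}_{\ell,v,o}}{} = \frac{1}{2}$, and together with the obvious $\schedgeq$ this gives $\Sched^\mathit{ml}_{\ell,v,e} \schedgtr \Sched^\mathit{ml}_{\ell,v,o}$.

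I expect the one point genuinely needing care to be the reduction from history-dependent to memoryless: one must argue that restricting memory costs nothing on $M_1$. This follows because the only nondeterminism sits in $\ell_1$, a state reached by a single zero-time history, so a memoryless scheduler can replicate any history-dependent decision there; the remaining edges out of $\ell_2$ and $\ell_3$ are forced almost surely by which clock expires first and offer no choice. The conditional win probabilities $1 - e(x)$ and $e(x)$ are then routine consequences of $e(y) \sim \textsc{Uni}(0,1)$ and the guard comparisons in $\ell_2$ and $\ell_3$.
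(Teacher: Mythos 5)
Your proof is correct and takes essentially the same route as the paper's: it reuses $M_1$ to obtain the separating values $\Pmax{\Sched^\mathit{ml}_{\ell,v,e}}{} = \frac{3}{4}$ versus $\Pmax{\Sched^\mathit{ml}_{\ell,v,o}}{} = \frac{1}{2}$, and it establishes $\Sched^\mathit{ml}_{\ell,v,e} \schedgeq \Sched^\mathit{ml}_{\ell,v,o}$ by noting that $v$ and $e$ together determine the expiration order. The only difference is that you make explicit the (correct) observation that memorylessness costs nothing on $M_1$, which the paper leaves implicit when it says the separation ``follows from the same argument as in the proof of \Cref{prop:IncompClassicOrderedHist}.''
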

\begin{proof}\proofnegspace{}
$\Sched^\mathit{ml}_{\ell,v,e} \schednleq \Sched^\mathit{ml}_{\ell,v,o}$ follows from the same argument as in the proof of \Cref{prop:IncompClassicOrderedHist}.
$\Sched^\mathit{ml}_{\ell,v,e} \schedgeq \Sched^\mathit{ml}_{\ell,v,o}$ is because knowing the current clock values $v$ and the expiration times $e$ is equivalent to knowing the expiration order, since that is precisely the order of the differences $e(c) - v(c)$ for all clocks~$c$.
\end{proof}

\begin{proposition}
\label{prop:ClassicVsOrderTime}
$\Sched^\mathit{ml}_{\ell,t,e} \schedneq \Sched^\mathit{ml}_{\ell,t,o}$.
\end{proposition}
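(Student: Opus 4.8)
The relation $\schedneq$ unfolds into the two non-relations $\Sched^\mathit{ml}_{\ell,t,e} \schedngeq \Sched^\mathit{ml}_{\ell,t,o}$ and $\Sched^\mathit{ml}_{\ell,t,o} \schedngeq \Sched^\mathit{ml}_{\ell,t,e}$, so the plan is to exhibit one distinguishing SA per direction and, as elsewhere, argue only about $p_{\max}$ (the $p_{\min}$ cases follow by relabelling \cmark\ and \xmark).

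For $\Sched^\mathit{ml}_{\ell,t,o} \schedngeq \Sched^\mathit{ml}_{\ell,t,e}$ I would reuse $M_1$ of \Cref{fig:Cex2}. The only nondeterministic choice is in $\ell_1$, which is always entered at global time $t = 0$, so the component $t$ is constant and both classes effectively see only the current state there; this is precisely the situation of \Cref{prop:IncompClassicOrderedHist}, now for memoryless schedulers (which is legitimate, since $\ell_1$ carries the sole choice). A scheduler in $\Sched^\mathit{ml}_{\ell,t,e}$ observes $e(x)$ and picks the edge realising $\max(e(x), 1 - e(x))$, giving $\Pmax{\Sched^\mathit{ml}_{\ell,t,e}}{} = \frac{3}{4}$, whereas for $\Sched^\mathit{ml}_{\ell,t,o}$ the expiration order in $\ell_1$ is constantly ``$y$ before $x$'' (as $y$ is still expired and $v(x) = v(y) = 0$), so every choice yields $\Pmax{\Sched^\mathit{ml}_{\ell,t,o}}{} = \frac{1}{2}$.

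For the converse $\Sched^\mathit{ml}_{\ell,t,e} \schedngeq \Sched^\mathit{ml}_{\ell,t,o}$ I would build a small SA $M_7$ with clocks $x,y,z$ in which the decisive datum is the \emph{order} of two \emph{remaining} lifetimes that $(t,e)$ cannot reconstruct. The edge into $\ell_1$ restarts $x$ and $z$; the edge out of $\ell_1$ has guard $\set{z}$ and restart set $\set{y,z}$; the edge out of $\ell_2$ has guard $\set{z}$ and restart set $\set{z}$; the nondeterministic choice between $\ell_4$ and $\ell_5$ sits in $\ell_3$ (empty guards), where $\ell_4$ has edges guarded by $\set{x}$ (to \cmark) and $\set{y}$ (to \xmark), and $\ell_5$ has them swapped. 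Reaching \cmark\ is optimal exactly when the choice matches the order of $e(x)-v(x)$ and $e(y)-v(y)$ in $\ell_3$, so a scheduler in $\Sched^\mathit{ml}_{\ell,t,o}$ reads this order off $o$ directly and attains $\Pmax{\Sched^\mathit{ml}_{\ell,t,o}}{} = 1$.

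The main obstacle is to show $\Pmax{\Sched^\mathit{ml}_{\ell,t,e}}{} < 1$, i.e.\ that $(t,e)$ genuinely cannot recover that order. The point of the gadget is that $x$ is restarted at time $0$ and never again, so $v(x) = t$ and the $(t,e)$-scheduler knows $e(x)-v(x) = e(x) - t$ exactly; but $y$ is restarted at the intermediate \emph{random} time $Z_1$ of the first $z$-expiration, giving $v(y) = t - Z_1$, while both $z$-samples $Z_1$ and $Z_2$ have been overwritten in $e$ by the third one. Thus from $(t,e)$ the scheduler learns only $Z_1 + Z_2 = t$, and the posterior of $Z_1$ -- hence of $v(y)$ and of $e(y)-v(y)$ -- is supported on a nondegenerate interval. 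I would finish by fixing concrete uniform delays and checking that, on a set of observations of positive measure, this interval straddles $e(x) - t$, so both orderings occur with positive conditional probability and any fixed decision errs with positive probability; this gives $\Pmax{\Sched^\mathit{ml}_{\ell,t,e}}{} < 1 = \Pmax{\Sched^\mathit{ml}_{\ell,t,o}}{}$. Together the two SA establish $\Sched^\mathit{ml}_{\ell,t,e} \schedneq \Sched^\mathit{ml}_{\ell,t,o}$.
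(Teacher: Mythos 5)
Your proposal establishes the right statement and its first half is exactly the paper's argument: for $\Sched^\mathit{ml}_{\ell,t,o} \schedngeq \Sched^\mathit{ml}_{\ell,t,e}$ the paper also reuses $M_1$ and the reasoning of \Cref{prop:IncompClassicOrderedHist} ($\frac{3}{4}$ versus $\frac{1}{2}$, legitimate for memoryless classes since the sole choice sits at $\ell_1$ with a trivial history). For the converse direction $\Sched^\mathit{ml}_{\ell,t,e} \schedngeq \Sched^\mathit{ml}_{\ell,t,o}$, however, the paper does not introduce a new automaton: it reuses $M_3$ of \Cref{fig:Cex3}, quoting the proof of \Cref{prop:ClassicValVsTime} for $\Pmax{\Sched^\mathit{ml}_{\ell,t,e}}{} < 1$ and observing that the expiration order reveals which of $x, y$ is already expired at $\ell_3$, giving $\Pmax{\Sched^\mathit{ml}_{\ell,t,o}}{} = 1$. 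Your $M_7$ is a genuinely different gadget: in $M_3$ the datum hidden from $(t,e)$ is the \emph{identity} of the expired clock, obscured by the symmetric race out of $\ell_2$, whereas in $M_7$ it is the \emph{restart time} of $y$, obscured by overwriting both relevant samples of $z$. Both work; the paper's route keeps the set of distinguishing SA small and gets the $<1$ bound for free from an earlier proposition, while yours is self-contained but requires the posterior computation you defer.

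One concrete caveat for that deferred check: in $M_7$ both $x$ and $y$ can already be expired on entering $\ell_3$ (this cannot happen in $M_3$, where exactly one is expired by construction). In that event both edges of $\ell_4$ (resp.\ $\ell_5$) are enabled simultaneously, and any scheduler---including one in $\Sched^\mathit{ml}_{\ell,t,e}$, which sees that $e(x) < t$---can still pick the edge to \cmark{} there. So ``the posterior interval straddles $e(x)-t$'' does \emph{not} by itself force an error when $e(x) < t$. You must place your positive-measure set of observations inside the region $e(x) > t$, where $x$ is unexpired and the both-expired escape is impossible; there straddling does force a positive error probability for either choice. With all clocks $\textsc{Uni}(0,1)$ such a set exists, \eg $t < 1$, $e(x) \in (t, 1)$, $e(y) \in (e(x)-t,\, e(x))$, since the posterior of $v(y)$ given $t$ is uniform on $(0,t)$. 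With that restriction your argument is complete and yields $\Pmax{\Sched^\mathit{ml}_{\ell,t,e}}{} < 1 = \Pmax{\Sched^\mathit{ml}_{\ell,t,o}}{}$ on $M_7$.
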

\begin{proof}\proofnegspace{}
$\Sched^\mathit{ml}_{\ell,t,e} \schednleq \Sched^\mathit{ml}_{\ell,t,o}$ follows from the same argument as in the proof of \Cref{prop:IncompClassicOrderedHist}.
For $\Sched^\mathit{ml}_{\ell,t,e} \schedngeq \Sched^\mathit{ml}_{\ell,t,o}$, consider the SA~$M_3$ of \Cref{fig:Cex3}.
We know from the proof of \Cref{prop:ClassicValVsTime} that $\Pmax{\Sched^\mathit{ml}_{\ell,t,e}}{} < 1$.
However, if the scheduler knows the order in which the clocks will expire, it knows which one has already expired (the first one in the order), and can thus make the optimal choice in $\ell_3$ to achieve $\Pmax{\Sched^\mathit{ml}_{\ell,t,o}}{} = 1$ (again via the same arguments as in the proof of \Cref{prop:ClassicValVsTime}).
\end{proof}

\begin{proposition}
\label{prop:ClassicVsOrderSimple}
$\Sched^\mathit{ml}_{\ell,e} \schedneq \Sched^\mathit{ml}_{\ell,o}$.
\end{proposition}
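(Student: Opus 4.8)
The plan is to prove the incomparability by exhibiting two stochastic automata, one for each non-domination direction, mirroring the structure of the proof of \Cref{prop:ClassicVsOrderTime} but with the global-time component removed. Recall that $\Sched^\mathit{ml}_{\ell,e} \schedneq \Sched^\mathit{ml}_{\ell,o}$ unfolds into the two claims $\Sched^\mathit{ml}_{\ell,o} \schedngeq \Sched^\mathit{ml}_{\ell,e}$ and $\Sched^\mathit{ml}_{\ell,e} \schedngeq \Sched^\mathit{ml}_{\ell,o}$, each of which I would establish by a single distinguishing SA on which one class strictly outperforms the other \wrt the maximum probability of reaching \cmark.

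For $\Sched^\mathit{ml}_{\ell,o} \schedngeq \Sched^\mathit{ml}_{\ell,e}$ (equivalently $\Sched^\mathit{ml}_{\ell,e} \schednleq \Sched^\mathit{ml}_{\ell,o}$) I would reuse $M_1$ of \Cref{fig:Cex2}, exactly as in the proof of \Cref{prop:IncompClassicOrderedHist}. In $\ell_1$ we have $v(x)=v(y)=0$, $e(x)\sim\textsc{Uni}(0,1)$, and $y$ still expired with $e(y)=0$; a scheduler that sees $e$ can branch on the value $e(x)$ and attains $\Pmax{\Sched^\mathit{ml}_{\ell,e}}{}=\frac34$, whereas a scheduler seeing only the order always observes ``$y$ before $x$'' (since $y$'s remaining time is $0$) and is limited to $\Pmax{\Sched^\mathit{ml}_{\ell,o}}{}=\frac12$. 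Because the history of $M_1$ carries no useful information for the choice in $\ell_1$, the memoryless and history-dependent variants coincide here, so the argument transfers verbatim to the memoryless classes.

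For the converse $\Sched^\mathit{ml}_{\ell,e} \schedngeq \Sched^\mathit{ml}_{\ell,o}$ I would use $M_4$ of \Cref{fig:Cex4}. \Cref{prop:ClassicTimeVersusPureML} already shows $\Pmax{\Sched^\mathit{ml}_{\ell,e}}{}<1$ there: on entering $\ell_2$ the clock $x$ has accumulated an unknown value $v(x)$ (the time spent waiting for $z$ in $\ell_1$), so the absolute expiration time $e(x)$ does not determine the remaining time $e(x)-v(x)$, and the $e$-scheduler cannot reliably decide whether $x$ or $y$ will expire first. I then observe that a scheduler seeing only the expiration order does know this: the order of $e(x)-v(x)$ and $e(y)-v(y)$ is precisely the sign of the difference of remaining times, which is exactly the information needed to pick $\ell_3$ when $x$ expires first and $\ell_4$ otherwise; as ties have probability $0$, this yields $\Pmax{\Sched^\mathit{ml}_{\ell,o}}{}=1$. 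Combining the two automata gives both non-domination claims, hence $\Sched^\mathit{ml}_{\ell,e} \schedneq \Sched^\mathit{ml}_{\ell,o}$.

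The main obstacle, and the only genuinely new content beyond citing earlier results, is the converse direction: one must find an SA in which the relative expiration order strictly beats the absolute expiration times in the \emph{memoryless} setting. The crux is that this requires the two relevant clocks to be started at different times, so that the $e$-view and the $o$-view decouple; $M_4$ arranges exactly this through its ``wait for $z$'' step in $\ell_1$, which desynchronises $v(x)$ from $v(y)$ while keeping the amount waited hidden from the $e$-scheduler. I would verify carefully that the auxiliary clock $z$ (fresh in $\ell_2$ and unused afterwards) does not disturb the order comparison between $x$ and $y$, and that the residual-lifetime semantics indeed leaves $v(x)$ undetermined for the $e$-class, which is precisely what makes the example separate the two classes.
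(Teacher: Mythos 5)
Your proposal is correct, and its overall shape is exactly that of the paper's proof: incomparability via two distinguishing SA, where the direction $\Sched^\mathit{ml}_{\ell,e} \schednleq \Sched^\mathit{ml}_{\ell,o}$ is obtained from $M_1$ of \Cref{fig:Cex2} just as the paper does (it inherits this from \Cref{prop:IncompClassicOrderedHist} via \Cref{prop:ClassicVsOrderTime}; your explicit remark that the history of $M_1$ is uninformative, so the argument transfers to the memoryless classes, is precisely the step left implicit there). The one divergence is the witness for $\Sched^\mathit{ml}_{\ell,e} \schedngeq \Sched^\mathit{ml}_{\ell,o}$: the paper reuses $M_3$ of \Cref{fig:Cex3}, where after the race in $\ell_2$ the expiration order tells the scheduler in $\ell_3$ which clock has \emph{already} expired, yielding probability $1$, while $\Pmax{\Sched^\mathit{ml}_{\ell,e}}{} < 1$ by the argument of \Cref{prop:ClassicValVsTime}. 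You instead reuse $M_4$ of \Cref{fig:Cex4}, where the order-scheduler at $\ell_2$ sees the relation between the residual lifetimes $e(x)-v(x)$ and $e(y)-v(y)$, \ie prophetically which clock will expire \emph{first}, yielding probability $1$, while \Cref{prop:ClassicTimeVersusPureML} already supplies $\Pmax{\Sched^\mathit{ml}_{\ell,e}}{} < 1$. Your $M_4$ argument is sound: $o$ is by definition the order of residual lifetimes, ties have probability zero, the only nondeterministic choice of $M_4$ sits at $\ell_2$, and the freshly restarted auxiliary clock $z$ cannot perturb the $(x,y)$ comparison. Both witnesses exploit the same underlying decoupling---without $v$, the times in $e$ do not determine residual lifetimes, whereas $o$ is exactly their order---so the difference is largely cosmetic: the paper's choice lets \Cref{prop:ClassicVsOrderTime} and \Cref{prop:ClassicVsOrderSimple} share a single example, whereas yours piggybacks on the example and suboptimality argument of \Cref{prop:ClassicTimeVersusPureML}; either packaging is equally valid.
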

\begin{proof}\proofnegspace{}
The argument of \Cref{prop:ClassicVsOrderTime} applies if we observe that, in SA~$M_3$ of \Cref{fig:Cex3}, we also have $\Pmax{\Sched^\mathit{ml}_{\ell,e}}{} < 1$ via the same argument for $\Sched^\mathit{ml}_{\ell,t,e}$ in the proof of \Cref{prop:ClassicValVsTime}.
\end{proof}
Among the expiration-order schedulers, the hierarchy is as expected:

\begingroup
\renewcommand*{\proofname}{Proof sketch}
\begin{proposition}
\label{prop:WithinOrder}
$\Sched^\mathit{ml}_{\ell,v,o} \schedgtr \Sched^\mathit{ml}_{\ell,t,o} \schedgtr \Sched^\mathit{ml}_{\ell,o}$.
\end{proposition}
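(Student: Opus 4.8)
The statement is a chain of two strict comparisons, $\Sched^\mathit{ml}_{\ell,v,o} \schedgtr \Sched^\mathit{ml}_{\ell,t,o}$ and $\Sched^\mathit{ml}_{\ell,t,o} \schedgtr \Sched^\mathit{ml}_{\ell,o}$, so the plan is to prove each link by separately establishing the inclusion $\schedgeq$ and then exhibiting a single distinguishing SA for the strict part. For the lower link, $\Sched^\mathit{ml}_{\ell,t,o} \schedgeq \Sched^\mathit{ml}_{\ell,o}$ holds trivially, since an $(\ell,t,o)$-scheduler sees everything an $(\ell,o)$-scheduler sees plus the elapsed time; this is one of the ``obvious'' inclusions the paper omits.

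The delicate inclusion is $\Sched^\mathit{ml}_{\ell,v,o} \schedgeq \Sched^\mathit{ml}_{\ell,t,o}$, and I expect it to be the main obstacle, because for a memoryless observer neither the clock-value vector $v$ nor the global time $t$ is literally a function of the other: once every clock has been restarted, $v$ no longer determines $t$. My plan is to argue it through the Markov structure of $\sem{M}$: the distribution over future runs from a state $\tuple{\ell,v,e}$ depends only on that state, so $t$ is conditionally independent of the future given the state and is useful only as a proxy for the hidden component $e$. Whenever some clock has never been restarted its value equals $t$, so $v$ already exposes $t$; and once every clock has been restarted the remaining lifetimes are governed by fresh, time-independent samples, so the extra knowledge of $t$ is worthless while $v$ and $o$ still pin down the relevant values and order. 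Turning this into a clean domination of $(\ell,v,o)$ over $(\ell,t,o)$, i.e.\ ruling out any SA in which $t$ outperforms $v$, together with the measurability of the matching scheduler, is where the real difficulty lies.

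For the strict part of the lower link I would use $M_5$ (\Cref{fig:CexY}), whose only choice sits at $\ell_3$, where $v(x)=t$ equals the time already spent waiting for $y$ and the optimal action depends on the residual lifetime $e(x)-t$ of $x$ relative to the fresh $y$ drawn on leaving $\ell_3$. An $(\ell,o)$-scheduler sees only the order bit and must commit to one order-indexed rule for all runs, whereas an $(\ell,t,o)$-scheduler may tune that rule to the observed $t$. I would therefore compute $\Pmax{\Sched^\mathit{ml}_{\ell,o}}{}$ for the best fixed order rule and show that the $t$-calibrated scheduler strictly exceeds it, giving $\Sched^\mathit{ml}_{\ell,t,o}\schedgtr\Sched^\mathit{ml}_{\ell,o}$.

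For the upper link a fresh construction is needed, since in $M_5$ the two relevant observations coincide. I would build a variant of $M_5$ in which $x$ and $y$ are restarted only after an initial wait for an auxiliary clock $w$ with random delay, so that at the decision location the decision-relevant value $v(x)$ equals the waiting time spent in the intermediate location while $t$ additionally contains the independent summand $e(w)$. Then $(\ell,v,o)$ observes exactly the quantity that calibrates the choice, while $(\ell,t,o)$ sees only $v(x)+e(w)$ and cannot recover $v(x)$; the same computation as for $M_5$, with the roles of $t$ and $v(x)$ swapped, then yields $\Pmax{\Sched^\mathit{ml}_{\ell,v,o}}{}>\Pmax{\Sched^\mathit{ml}_{\ell,t,o}}{}$. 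The routine-but-tedious extremal-probability integrals for these two SA, and the general domination argument above, are the steps that carry the weight.
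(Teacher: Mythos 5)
On the separation side your proposal coincides with the paper's proof down to the examples: for $\Sched^\mathit{ml}_{\ell,t,o} \schedgtr \Sched^\mathit{ml}_{\ell,o}$ the paper uses $M_5$ (\Cref{fig:CexY}) with exactly your argument (the order bit alone cannot say how close $x$ is to expiring, whereas the time spent in $\ell_2$ calibrates the choice at $\ell_3$), and for $\Sched^\mathit{ml}_{\ell,v,o} \schedgtr \Sched^\mathit{ml}_{\ell,t,o}$ the paper uses precisely your ``variant of $M_5$'': it prepends an initial $\textsc{Uni}(0,1)$ delay, as in $M_3$, so that $t$ equals the decision-relevant age $v(x)$ plus independent noise, while an $(\ell,v,o)$-scheduler still sees $v(x)$ itself.

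The genuine gap is in your treatment of the domination half $\Sched^\mathit{ml}_{\ell,v,o} \schedgeq \Sched^\mathit{ml}_{\ell,t,o}$. You are right that this is not a plain information inclusion ($t$ is not a function of $(\ell,v,o)$ once every clock has been restarted), and right that this is where the difficulty sits; note that the paper offers no argument for it either---it is silently subsumed under the convention of omitting ``obvious'' inclusions, although that convention does not literally apply here. But the argument you sketch for it is unsound: the claim that after all clocks have been restarted ``the remaining lifetimes are governed by fresh, time-independent samples, so the extra knowledge of $t$ is worthless'' ignores conditioning. An expiration time is fresh at the instant of restart, but by decision time it has been conditioned by everything that happened since the restart---in particular by races against clocks that were restarted \emph{earlier}---and this can correlate the residual lifetimes with $t$ in a way that $(v,o)$ does not determine. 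Concretely, restart an exponentially distributed clock $w$ after a waiting phase of random length $A$ and let it race a $\textsc{Uni}(0,4)$ clock $c$ that was restarted before that phase, with both race outcomes leading to the same location and the same resets: by memorylessness of $w$ and the constant density of $c$, the posterior odds that it was $w$ who expired are proportional to $4-t$, a function of the global time alone, whereas an $(\ell,v,o)$-scheduler observes only the race duration $v(w)$ and must average over the unseen $A$. So $t$ can carry decision-relevant information that $(\ell,v,o)$ provably lacks; your case distinction does not rule this out, and any correct proof of the $\schedgeq$ half has to confront exactly this phenomenon (indeed, this mechanism puts the universal domination itself in doubt). In short: your strict separations are the paper's, but the inclusion that you yourself identified as the crux remains unproven, and the sketch you give for it would fail.
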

\begin{proof}\proofnegspace{}
Consider the SA $M_5$ in \Cref{fig:CexY}.
To maximise the probability, in $\ell_3$ we should go to $\ell_4$ whenever $x$ is already expired or close to expiring, for which the amount of time spent in $\ell_2$ is an indicator.
$\Sched^\mathit{ml}_{\ell,o}$ only knows that $x$ may have expired when the expiration order is ``$x$ before $y$'', but definitely has not expired when it is ``$y$ before $x$''.
Schedulers in $\Sched^\mathit{ml}_{\ell,t,o}$ can do better:
They also see the exact amount of time spent in $\ell_2$.
Thus $\Sched^\mathit{ml}_{\ell,t,o} \schedgtr \Sched^\mathit{ml}_{\ell,o}$.
If we modify $M_5$ by adding an initial $e(z) \sim \textsc{Uni}(0, 1)$ delay from $\ell_0$ to $\ell_1$, as in $M_3$, then the same argument can be used to prove $\Sched^\mathit{ml}_{\ell,v,o} \schedgtr \Sched^\mathit{ml}_{\ell,t,o}$, because the extra delay makes knowing the global elapsed time $t$ useless, but it is visible to $\Sched^\mathit{ml}_{\ell,v,o}$ as $v(x)$.
\end{proof}
\endgroup\noindent
We have thus established the hierarchy of classic schedulers shown in \Cref{fig:PropheticHierarchy}, noting that some of the relationships follow from the propositions by transitivity.

\subsection{The Non-Prophetic Hierarchy}

Each non-prophetic scheduler class is clearly dominated by the classic and expira\-tion-order scheduler classes that otherwise have the same information, for example $\Sched^\mathit{hist}_{\ell,v,e} \schedgtr \Sched^\mathit{hist}_{\ell,v}$ (with very simple distinguishing SA).
We show that the non-prophetic hierarchy follows the shape of the classic case, including the difference between global-time and pure memoryless schedulers, with the notable exception of memoryless schedulers being weaker than history-dependent ones.


\begin{proposition}
\label{prop:EquivalenceNonProphHist}
$\Sched^\mathit{hist}_{\ell,v} \schedeq \Sched^\mathit{hist}_{\ell,t} \schedeq \Sched^\mathit{hist}_{\ell}$.
\end{proposition}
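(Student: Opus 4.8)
The plan is to mirror the reconstruction argument used for the classic history-dependent classes in \Cref{prop:EquivalenceClassicHist}, observing that removing the expiration times $e$ uniformly from all three classes leaves that argument intact. Concretely, I would show that the three restricted histories---over locations only, over location-plus-elapsed-time, and over location-plus-clock-values---are mutually reconstructible, so that any scheduler in one class can be simulated by a scheduler in each of the others, giving $\Sched^\mathit{hist}_{\ell,v} \schedeq \Sched^\mathit{hist}_{\ell,t} \schedeq \Sched^\mathit{hist}_{\ell}$.

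First I would establish the non-trivial inclusions $\Sched^\mathit{hist}_{\ell} \schedgeq \Sched^\mathit{hist}_{\ell,t}$ and $\Sched^\mathit{hist}_{\ell} \schedgeq \Sched^\mathit{hist}_{\ell,v}$ by showing that even the weakest class, which annotates the history with locations only, can recompute $t$ and $v$ purely from the transition labels, which are present in every history by definition (each history ranges over sequences that include labels in $\Alphabet \uplus \RRplus$). The total elapsed time is the sum of the delay labels (the elements of $\RRplus$) up to the current point. Each clock value $v(c)$ is recovered by summing the delay labels since the last edge whose restart set contained $c$; here I rely on the paper's w.l.o.g.\ assumption that an edge is uniquely determined by its source location and action label, so the action labels in the history pin down exactly which restart sets were applied. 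The reverse directions $\Sched^\mathit{hist}_{\ell,v} \schedgeq \Sched^\mathit{hist}_{\ell}$ and $\Sched^\mathit{hist}_{\ell,t} \schedgeq \Sched^\mathit{hist}_{\ell}$ are immediate, since a richer state annotation can always be ignored. Because the reconstruction recovers precisely the $v$- and $t$-components that the weaker classes drop, the forgetful maps and the reconstruction maps are mutual inverses at the level of histories, yielding a correspondence between the scheduler domains and hence the claimed equivalences.

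The step I expect to require the most care is confirming that the expiration information stays genuinely out of reach, \ie that this reconstruction never smuggles $e$ or the order $o$ back in. Summing delay labels tells us when clocks that have \emph{already} expired did so---a past event, legitimately observable---but reveals nothing about the sampled expiration times of clocks that are currently running, which is exactly what separates the non-prophetic from the classic setting. Since $e$ is absent from all three classes alike, I never need to reconstruct it for the equivalence, so the comparison reduces cleanly to the $v$/$t$ reconstruction above. The only remaining routine obligation is measurability of the reconstruction map, which follows exactly as in \Cref{prop:EquivalenceClassicHist}.
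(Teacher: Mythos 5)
Your proposal is correct and follows essentially the same route as the paper: the paper's proof of \Cref{prop:EquivalenceNonProphHist} simply invokes the reconstruction argument of \Cref{prop:EquivalenceClassicHist}, namely that the transition labels in $\Alphabet \uplus \RRplus$ present in every history let even the location-only class recompute $t$ (by summing delay labels) and $v$ (by per-clock sums together with the restarts of edges identified by the action labels). Your additional check that this reconstruction cannot smuggle $e$ or $o$ back in is a sound observation about why the argument transfers cleanly to the non-prophetic setting, but it is the same proof.
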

\begin{proof}\proofnegspace{}
This follows from the argument of \Cref{prop:EquivalenceClassicHist}.
\end{proof}

\begin{proposition}
\label{prop:NonProphHistVsFullMemoryless}
$\Sched^\mathit{hist}_{\ell,v} \schedgtr \Sched^\mathit{ml}_{\ell,v}$.
\end{proposition}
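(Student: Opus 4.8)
The plan is to use the SA $M_6$ of \Cref{fig:Cex1} as the distinguishing example. The direction $\Sched^\mathit{hist}_{\ell,v} \schedgeq \Sched^\mathit{ml}_{\ell,v}$ holds by the standing convention, since a history-dependent scheduler sees everything a memoryless one does. It therefore remains to establish strictness, i.e.\ a gap between $\Pmax{\Sched^\mathit{hist}_{\ell,v}}{}$ and $\Pmax{\Sched^\mathit{ml}_{\ell,v}}{}$ on $M_6$.

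First I would pin down the clock behaviour of $M_6$. On the edge $\ell_0 \to \ell_1$ both $x$ and $y$ are restarted simultaneously, and no clock is ever reset afterwards; hence at every subsequent state $v(x) = v(y)$, equal to the time elapsed since entering $\ell_1$. In $\ell_1$ the two outgoing edges are guarded by $\{x\}$ and $\{y\}$, so with probability one the first clock to expire determines which edge is taken, and since the two edges carry distinct action labels, this choice is recorded in the history. Because no reset occurs between $\ell_1$ and the final gadget, the clock that expired first in $\ell_1$ is exactly the clock whose guard is already satisfied below: in $\ell_3$ we reach \cmark{} iff $x$ expired first, and in $\ell_4$ iff $y$ expired first.

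Next, for the history-dependent class I would define the scheduler that, upon reaching $\ell_2$, inspects which labelled edge was used to leave $\ell_1$: if it was the $\{x\}$-guarded edge it moves to $\ell_3$, otherwise to $\ell_4$. In either case the already-expired clock immediately satisfies the winning guard, so this scheduler reaches \cmark{} with probability one, giving $\Pmax{\Sched^\mathit{hist}_{\ell,v}}{} = 1$. For the memoryless class I would argue that the current state in $\ell_2$ is always of the form $\tuple{\ell_2, v}$ with $v(x) = v(y) = m$, where $m = \min(e(x), e(y))$ is the first expiration time; crucially, this observation is identical no matter which clock expired first. By symmetry of the two $\textsc{Uni}(0,1)$ distributions, the conditional probability that $x$ (rather than $y$) expired first, given any observed value $m$, is exactly $\tfrac12$; hence every memoryless decision in $\ell_2$---possibly randomised and possibly depending on $m$---yields \cmark{} with probability $\tfrac12$, so $\Pmax{\Sched^\mathit{ml}_{\ell,v}}{} = \tfrac12$.

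The main obstacle is the memoryless upper bound: one must make precise that the information distinguishing the two branches lives entirely in the hidden expiration data and does not leak into the visible valuation $v$. This rests on the two structural facts that $v(x) = v(y)$ identically in $\ell_2$ and that conditioning on the observed $m$ leaves the ``which clock won'' event at probability $\tfrac12$; establishing these carefully, and noting that allowing the decision to depend on $m$ cannot beat the constant $\tfrac12$, completes the argument and yields $1 = \Pmax{\Sched^\mathit{hist}_{\ell,v}}{} > \Pmax{\Sched^\mathit{ml}_{\ell,v}}{} = \tfrac12$. I would emphasise that this is precisely the contrast with the classic case (\Cref{prop:ClassicHistMemorylessEquiv}): there memory is redundant because the full state already exposes the expiration data, whereas here hiding $e$ makes the branch choice recoverable only from the history.
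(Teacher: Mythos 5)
Your proposal is correct and takes essentially the same route as the paper's proof: both use $M_6$, establish $\Pmax{\Sched^\mathit{hist}_{\ell,v}}{} = 1$ by reading off from the history which edge entered $\ell_2$ (hence which clock has already expired), and establish $\Pmax{\Sched^\mathit{ml}_{\ell,v}}{} = \frac{1}{2}$ from the fact that $v(x) = v(y)$ in $\ell_2$, so the visible valuation gives no basis for an informed choice. Your explicit conditioning argument---that given the common value $m = \min(e(x),e(y))$ the event ``$x$ expired first'' still has probability $\frac{1}{2}$ by symmetry, so even $m$-dependent randomised decisions cannot exceed $\frac{1}{2}$---merely makes precise what the paper leaves implicit.
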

\begin{proof}\proofnegspace{}
Consider the SA $M_6$ in \Cref{fig:Cex1}.
It is similar to $M_4$ of \Cref{fig:Cex4}, and our arguments are thus similar to the proof of \Cref{prop:ClassicTimeVersusPureML}.
On $M_6$, we have $\Pmax{\Sched^\mathit{hist}_{\ell,v}}{} = 1$:
in $\ell_2$, the history reveals which of the two incoming edges was used, \ie which clock is already expired, thus the scheduler can make the optimal choice.
However, if neither the history nor $e$ is available, we get $\Pmax{\Sched^\mathit{ml}_{\ell,v}}{} = \frac{1}{2}$:
the only information that can be used in $\ell_2$ are the values of the clocks, but $v(x) = v(y)$, so there is no basis for an informed choice.
\end{proof}

\begin{proposition}
\label{prop:NonProphValVsTime}
$\Sched^\mathit{hist}_{\ell,t} \schedgtr \Sched^\mathit{ml}_{\ell,t}$ and $\Sched^\mathit{ml}_{\ell,v} \schedgtr \Sched^\mathit{ml}_{\ell,t}$.
\end{proposition}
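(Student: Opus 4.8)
The plan is to handle the proposition as two separations, each split into the easy \schedgeq{} inclusion, a distinguishing SA for strictness, and---only for the second part---a genuinely non-trivial \schedgeq{} argument.

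For $\Sched^\mathit{hist}_{\ell,t} \schedgtr \Sched^\mathit{ml}_{\ell,t}$, the relation $\Sched^\mathit{hist}_{\ell,t} \schedgeq \Sched^\mathit{ml}_{\ell,t}$ is an obvious inclusion, since a history-dependent scheduler may ignore the history. For strictness I would reuse $M_6$ of \Cref{fig:Cex1} exactly as in the proof of \Cref{prop:NonProphHistVsFullMemoryless}: from the uniquely labelled edge taken in $\ell_1$, a history-dependent scheduler learns which of $x,y$ expired first and hence which clock is already expired in $\ell_2$, so it makes the winning choice and $\Pmax{\Sched^\mathit{hist}_{\ell,t}}{} = 1$. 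A memoryless $(\ell,t)$-scheduler observes in $\ell_2$ only $t = \min(e(x),e(y))$ (the step $\ell_0 \to \ell_1$ is instantaneous); as $x$ and $y$ are identically distributed, the identity of the first clock to expire is independent of this minimum, so no choice beats $\Pmax{\Sched^\mathit{ml}_{\ell,t}}{} = \frac{1}{2}$.

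For the strict part of $\Sched^\mathit{ml}_{\ell,v} \schedgtr \Sched^\mathit{ml}_{\ell,t}$ I would take $M_3$ of \Cref{fig:Cex3}. In $\ell_3$ exactly one of $x,y$ has just been reset---the clock in the restart set of the edge actually taken in $\ell_2$---so a $v$-scheduler reads off from whether $v(x)=0$ or $v(y)=0$ which clock is already expired and chooses optimally, giving $\Pmax{\Sched^\mathit{ml}_{\ell,v}}{}=1$, just as $\Sched^\mathit{ml}_{\ell,v,e}$ does in \Cref{prop:ClassicValVsTime}. The global time on entering $\ell_3$, however, is the $\ell_1 \to \ell_2$ delay plus $\min(e(x),e(y))$, whose argmin is independent of its value, so $\Pmax{\Sched^\mathit{ml}_{\ell,t}}{} = \frac{1}{2}$ and thus $\Sched^\mathit{ml}_{\ell,v} \schednleq \Sched^\mathit{ml}_{\ell,t}$.

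The main obstacle is the remaining direction $\Sched^\mathit{ml}_{\ell,v} \schedgeq \Sched^\mathit{ml}_{\ell,t}$, which is \emph{not} an obvious inclusion because a $v$-scheduler does not literally see $t$. I would prove it by showing that the clock valuation is a sufficient statistic for the unobservable expiration times: since each clock is resampled independently from its delay measure at every restart, at any reachable state the conditional distribution of $e$ given $(\ell,v)$ is a product of per-clock truncated delay measures fixed by $(\ell,v)$ and the guard structure, and in particular is independent of the global time $t$. Consequently observing $(\ell,t)$ is a randomisation of observing $(\ell,v)$ that does not depend on $e$: the belief about $e$ held by a $t$-scheduler is a mixture over $v$ of the beliefs held by a $v$-scheduler, which the latter can resolve and optimise pointwise. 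Propagating this domination through the finitely-branching decision process yields $\Pmax{G}{\Sched^\mathit{ml}_{\ell,v}} \ge \Pmax{G}{\Sched^\mathit{ml}_{\ell,t}}$ for every $G$, with the $\Pmin$ inequality following by relabelling. Making the conditional independence rigorous under the measure induced by a fixed scheduler, and composing it across the sequence of choices, is where the real effort lies; once it and the $M_3$ witness are in place, $\Sched^\mathit{ml}_{\ell,v} \schedgtr \Sched^\mathit{ml}_{\ell,t}$ follows immediately.
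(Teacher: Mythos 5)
Your two strictness arguments are correct and essentially the paper's own: the paper proves both separations on the single SA $M_3$, arguing that $\Pmax{\Sched^\mathit{hist}_{\ell,t}}{} = \Pmax{\Sched^\mathit{ml}_{\ell,v}}{} = 1$ (the history, resp.\ the pattern of which clock value is zero, reveals which clock is already expired) while $\Pmax{\Sched^\mathit{ml}_{\ell,t}}{} = \frac{1}{2}$ (the elapsed time is independent of which clock won the race). Your substitution of $M_6$ for the first separation is harmless; it is exactly the gadget and argument the paper uses for \Cref{prop:NonProphHistVsFullMemoryless}.

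The genuine problem is your third paragraph. You are right that $\Sched^\mathit{ml}_{\ell,v} \schedgeq \Sched^\mathit{ml}_{\ell,t}$ is not an information inclusion (the paper never proves this half either; its blanket convention for omitting \schedgeq{} proofs does not literally apply, since $t$ cannot be reconstructed from $v$). But your sufficient-statistic argument does not close the hole, because its central claim is false: the conditional law of $e$ given $(\ell,v)$ is \emph{not} fixed by $(\ell,v)$ and the guard structure, and it is \emph{not} independent of $t$. What a non-prophetic observer can infer about $e(c)$ depends on which locations were visited since $c$'s last restart: in a location with an edge guarded by $\set{c}$, urgency forces $e(c)$ to exceed the value $c$ had when that location was left, whereas a location with no guard on $c$ yields no constraint. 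Two histories can merge at exactly the same $(\ell,v)$ --- restart every clock at matching relative times on both branches --- while one branch ``watched'' $c$ and the other did not, and while taking different total time. Then $t$ identifies the branch and hence carries information about $e(c)$ that $v$ does not, contradicting your conditional independence. Concretely, one can build such an SA: a watched branch (so $e(c) > v(c)$ on arrival) of long duration and an unwatched branch (so $c$ may already be expired) of short duration, merging at a decision between betting on $c$ or on a deterministic fresh clock; the optimal bet differs between the branches, only $t$ reveals the branch, and the best $(\ell,t)$-scheduler strictly beats every $(\ell,v)$-scheduler. Independently of this, your ``randomisation'' step also fails for memoryless schedulers: a $(\ell,v)$-scheduler simulating a $t$-based one would have to resample its fictitious $t$ afresh at every visited state, changing the joint law of its decisions. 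So your paragraph three cannot be repaired as stated; the honest position is that both your proposal and the paper establish only the strict-inequality half of this proposition, on $M_3$.
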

\begin{proof}\proofnegspace{}
Consider the SA~$M_3$ in \Cref{fig:Cex3}.
We have $\Pmax{\Sched^\mathit{hist}_{\ell,t}}{} = \Pmax{\Sched^\mathit{ml}_{\ell,v}}{} = 1$, but $\Pmax{\Sched^\mathit{ml}_{\ell,t}}{} = \frac{1}{2}$ by the same arguments as in the proof of \Cref{prop:ClassicValVsTime}.
\end{proof}

\begin{proposition}
\label{prop:NonProphTimeVersusPureML}
$\Sched^\mathit{ml}_{\ell,t} \schedgtr \Sched^\mathit{ml}_{\ell}$.
\end{proposition}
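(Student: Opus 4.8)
The plan is to reuse the distinguishing SA $M_4$ of \Cref{fig:Cex4}, exactly as in the proof of \Cref{prop:ClassicTimeVersusPureML}, but to replace the qualitative ``probability~$1$'' argument by a quantitative one, since in the non-prophetic setting neither class may consult the expiration times in $e$. The inclusion $\Sched^\mathit{ml}_{\ell,t} \schedgeq \Sched^\mathit{ml}_{\ell}$ is immediate, so all the work is to exhibit a strict gap on $M_4$.

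First I would observe that in $\ell_2$ (and hence in $\ell_3$ and $\ell_4$) the global elapsed time $t$ coincides with $v(x)$: the only delay along any run to $\ell_2$ is the wait for $z$ in $\ell_1$, and $x$ is restarted on entering $\ell_1$ but never reset afterwards. Thus a scheduler in $\Sched^\mathit{ml}_{\ell,t}$ effectively observes $v(x) = t$ without seeing any expiration time, whereas a scheduler in $\Sched^\mathit{ml}_{\ell}$ sees only the location. The single reachable nondeterministic choice is in $\ell_2$: going to $\ell_3$ bets that $x$ expires before the freshly restarted $y$, while going to $\ell_4$ bets the opposite.

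Next I would compute, conditioned on the observed value $t$, the probabilities $P_3(t)$ and $P_4(t)$ of reaching \cmark{} when committing in $\ell_2$ to $\ell_3$ resp.\ $\ell_4$, averaging over the unobserved $e(x) \sim \textsc{Uni}(0,2)$ and the fresh $e(y) \sim \textsc{Uni}(0,1)$, and carefully separating the subcase $t \geq e(x)$ (in which $x$ fires immediately, forcing \cmark{} from $\ell_3$ and \xmark{} from $\ell_4$) from $t < e(x)$. The crux is to show that the sign of $D(t) \defeq P_3(t) - P_4(t)$ flips: a short calculation gives $D(t) = t - \frac{1}{2}$ for $t \in [0,1]$, so $\ell_4$ is optimal for small $t$ and $\ell_3$ for large~$t$. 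Since a scheduler in $\Sched^\mathit{ml}_{\ell}$ must commit to one fixed (possibly randomised) choice at the unique visit to $\ell_2$, its best value is $\max(\mathbb{E}_t[P_3(t)],\mathbb{E}_t[P_4(t)])$, whereas $\Sched^\mathit{ml}_{\ell,t}$ attains $\mathbb{E}_t[\max(P_3(t),P_4(t))]$; because $D$ is strictly negative and strictly positive on sets of positive measure, the latter is strictly larger. Concretely one obtains $\Pmax{\Sched^\mathit{ml}_{\ell,t}}{} = \frac{37}{48} > \frac{17}{24} = \Pmax{\Sched^\mathit{ml}_{\ell}}{}$.

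The main obstacle is the piecewise case analysis in the integrals for $P_3(t)$ and $P_4(t)$: because $e(x)$ ranges over $[0,2]$ while $e(y)$ ranges over $[0,1]$, the conditional reach probabilities are piecewise-defined and the subcases $t \geq e(x)$, $e(x) < t + 1$, and $e(x) \geq t+1$ must be handled separately before integrating over $t \sim \textsc{Uni}(0,2)$. A minor secondary point is to justify that restricting $\Sched^\mathit{ml}_{\ell}$ to a single deterministic choice in $\ell_2$ is without loss of optimality; this is immediate since there is exactly one reachable decision point and the reachability objective is linear in the chosen distribution over the two edges, so an extreme point is optimal.
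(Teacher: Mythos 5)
Your proof is correct and follows essentially the same route as the paper's: the same distinguishing SA $M_4$ of \Cref{fig:Cex4}, the same observation that the global time $t$ reveals $v(x)$ in $\ell_2$, and the same baseline value (the paper also obtains $\Pmax{\Sched^\mathit{ml}_{\ell}}{} = \frac{17}{24}$, achieved by always going to $\ell_3$). Where your argument genuinely differs is in how the strict gap is certified. The paper exhibits one threshold scheduler and reports $\Pmax{\Sched^\mathit{ml}_{\ell,t}}{} \approx 0.771$ as a \emph{statistically estimated} value, so its proof of strictness partially rests on simulation; you instead carry out the piecewise integration exactly, obtaining $D(t) = t - \frac{1}{2}$ on $[0,1]$ (and $D(t) > 0$ on $[1,2]$), hence the optimal time-dependent value $\mathbb{E}_t[\max(P_3(t),P_4(t))] = \frac{37}{48} = 0.7708\overline{3} > \frac{34}{48} = \frac{17}{24}$. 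This agrees numerically with the paper's estimate and makes the proposition fully analytical, which is a genuine strengthening. Incidentally, your computation also corrects a slip in the paper's proof: the example strategy stated there, ``going to $\ell_3$ when $t \leq \frac{1}{2}$ and to $\ell_4$ otherwise,'' has the two locations the wrong way around---as literally written it achieves only $\frac{11}{48} = 1 - \frac{37}{48}$ (it is the \emph{minimizing} strategy, since $P_3(t) + P_4(t) = 1$ for all $t$)---whereas the reported value $0.771$ corresponds exactly to your orientation: $\ell_4$ for small $t$ and $\ell_3$ for large $t$, because a large elapsed time makes it likely that $x$ has already expired, which is favourable in $\ell_3$.
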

\begin{proof}\proofnegspace{}
Consider the SA $M_4$ in \Cref{fig:Cex4}.
The schedulers in $\Sched^\mathit{ml}_{\ell}$ have no information but the current location, so they cannot make an informed choice in $\ell_2$.
This and the simple loop-free structure of $M_4$ make it possible to analytically calculate the resulting probability:
$\Pmax{\Sched^\mathit{ml}_{\ell}}{} = \frac{17}{24} = 0.708\overline{3}$. 
If information about the global elapsed time $t$ in $\ell_2$ is available, however, the value of $x$ is revealed.
This allows making a better choice, \eg going to $\ell_3$ when $t \leq \frac{1}{2}$ and to $\ell_4$ otherwise, resulting in $\Pmax{\Sched^\mathit{ml}_{\ell,t}}{} \approx 0.771$ (statistically estimated with high confidence).
\end{proof}
We have thus established the hierarchy of non-prophetic schedulers shown in \Cref{fig:NonPropheticHierarchy}, where some relationships follow from the propositions by transitivity.

\section{Experiments}
\label{sec:Experiments}

We have built a prototype implementation of lightweight scheduler sampling for SA by extending the \toolset's~\cite{HH14} \textsc{modes} simulator, which already supports deterministic stochastic timed automata (STA~\cite{BDHK06}).
With some care, SA can be encoded into STA.
Using the original algorithm for MDP of~\cite{DLST15}, our prototype works by providing to the schedulers a discretised view of the continuous components of the SA's semantics, which, we recall, is a continuous-space MDP.
The currently implemented discretisation is simple:
for each real-valued quantity (the value $v(c)$ of clock $c$, its expiration time $e(c)$, and the global elapsed time~$t$), it identifies all values that lie within the same interval $[\frac{i}{n}, \frac{i+1}{n})$, for integers $i$, $n$.
We note that better static discretisations are almost certainly possible, \eg a region construction for the clock values as in~\cite{KNSS00}.

We have modelled $M_1$ through $M_6$ as STA in \modest.
For each scheduler class and model in the proof of a proposition, and discretisation factors $n \in \set{1, 2, 4}$, we sampled $10\,000$ schedulers and performed statistical model checking for each of them in the lightweight manner.
In \Cref{fig:SMCResults} we report the min.\ and max.\ estimates, $(\hat p_\mathrm{min}, \hat p_\mathrm{max})_{\ldots}$, over all sampled schedulers.
Where different discretisations lead to different estimates, we report the most extremal values.
The subscript denotes the discretisation factors that achieved the reported estimates.
The analysis for each sampled scheduler was performed with a number of simulation runs sufficient for the overall max./min.\ estimates to be within $\pm\,0.01$ of the true maxima/minima of the \emph{sampled} set of schedulers with probability $\geq0.95$~\cite{DLST15}.
Note that $\hat p_\mathrm{min}$ is an upper bound on the true minimum probability and $\hat p_\mathrm{max}$ is a lower bound on the true maximum probability.

\begin{figure}[t]
\begin{multicols}{4}
\fontsize{8}{10}\selectfont
$M_1$:\\[2pt]
Proposition~\ref{prop:IncompClassicOrderedHist}:\\
$\Sched^\mathit{hist}_{\ell,v,e}\colon (0.24, 0.76)_{2,4}$\\
$\Sched^\mathit{hist}_{\ell,v,o}\colon (0.49, 0.51)_{1,2,4}$\\[4pt]
Proposition~\ref{prop:ClassicVsOrderVal}:\\
$\Sched^\mathit{ml}_{\ell,v,e}\colon (0.24, 0.76)_{2,4}$\\
$\Sched^\mathit{ml}_{\ell,v,o}\colon (0.49, 0.51)_{1,2,4}$\\[4pt]
Proposition~\ref{prop:ClassicVsOrderTime}:\\
$\Sched^\mathit{ml}_{\ell,t,e}\colon (0.24, 0.76)_{2,4}$\\
$\Sched^\mathit{ml}_{\ell,t,o}\colon (0.49, 0.51)_{1,2,4}$\\[4pt]
Proposition~\ref{prop:ClassicVsOrderSimple}:\\
$\Sched^\mathit{ml}_{\ell,e}\colon (0.24, 0.76)_{2,4}$\\
$\Sched^\mathit{ml}_{\ell,o}\colon (0.49, 0.51)_{1,2,4}$
\vfill\null
\columnbreak
$M_3$:\\[2pt]
Proposition~\ref{prop:ClassicValVsTime}:\\
$\Sched^\mathit{hist}_{\ell,t,e}\colon (0.00, 1.00)_1$\\
$\Sched^\mathit{ml}_{\ell,v,e\!}\colon (0.22, 0.78)_2$\\
$\Sched^\mathit{ml}_{\ell,t,e}\colon (0.40, 0.60)_2$\\[4pt]
Proposition~\ref{prop:ClassicVsOrderTime}:\\
$\Sched^\mathit{ml}_{\ell,t,e}\colon (0.40, 0.60)_2$\\
$\Sched^\mathit{ml}_{\ell,t,o}\colon (0.00, 1.00)_{1}$\\[4pt]
Proposition~\ref{prop:ClassicVsOrderSimple}:\\
$\Sched^\mathit{ml}_{\ell,e}\colon (0.38, 0.63)_2$\\
$\Sched^\mathit{ml}_{\ell,o}\colon (0.00, 1.00)_{1,2,4}$\\[4pt]
Proposition~\ref{prop:NonProphValVsTime}:\\
$\Sched^\mathit{hist}_{\ell,t}\colon (0.00, 1.00)_{1,2}$\\
$\Sched^\mathit{ml\phantom{k}}_{\ell,v}\colon (0.22, 0.78)_{4}$\\
$\Sched^\mathit{ml\phantom{k}}_{\ell,t}\colon (0.49, 0.51)_{1,2,4}$
\vfill\null
\columnbreak
$M_2$:\\[2pt]
Proposition~\ref{prop:OrderedHistVsMemoryless}:\\
$\Sched^\mathit{hist}_{\ell,o\phantom{,v}}\colon (0.06, 0.94)_{1,2,4}$\\
$\Sched^\mathit{ml}_{\ell,v,o}\colon (0.18, 0.83)_1$\\[8pt]
$M_4$:\\[2pt]
Proposition~\ref{prop:ClassicTimeVersusPureML}:\\
$\Sched^\mathit{ml}_{\ell,t,e}\colon (0.25, 0.79)_1$\\
$\Sched^\mathit{ml}_{\ell,e\phantom{,t}}\colon (0.29, 0.71)_1$\\[4pt]
Proposition~\ref{prop:NonProphTimeVersusPureML}:\\
$\Sched^\mathit{ml}_{\ell,t}\colon (0.22, 0.78)_{2,4}$\\
$\Sched^\mathit{ml}_{\ell}\colon (0.28, 0.72)_{1,2,4}$
\vfill\null
\columnbreak
$M_5$:\\[2pt]
Proposition~\ref{prop:WithinOrder}:\\
$\Sched^\mathit{ml}_{\ell,t,o}\colon (0.15, 0.86)_{4}$\\
$\Sched^\mathit{ml}_{\ell,o\phantom{,t}}\colon (0.16, 0.84)_{1,2,4}$\\[8pt]
$M_6$:\\[2pt]
Proposition~\ref{prop:NonProphHistVsFullMemoryless}:\\
$\Sched^\mathit{hist}_{\ell,v}\colon (0.00, 1.00)_{1,2}$\\
$\Sched^\mathit{ml\phantom{k}}_{\ell,v}\colon (0.49, 0.51)_{1,2,4}$
\end{multicols}\vspace{-15pt}
\caption{Results from the prototype of lightweight scheduler sampling for SA}
\label{fig:SMCResults}
\vspace{-0.5em}
\end{figure}

Increasing the discretisation factor or increasing the scheduler power generally increases the number of decisions the schedulers {\em can} make.
This may also increase the number of {\em critical} decisions a scheduler {\em must} make to achieve the extremal probability.
Hence,  the sets of discretisation factors associated to specific experiments may be informally interpreted in the following way:
\begin{itemize}
\item$\{1,2,4\}$: Fine discretisation is not important for optimality and optimal schedulers are not rare.
\item$\{1,2\}$: Fine discretisation is not important for optimality, but increases rarity of optimal schedulers.
\item$\{2,4\}$: Fine discretisation is important for optimality, optimal schedulers are not rare.
\item$\{1\}$: Optimal schedulers are very rare.
\item$\{2\}$: Fine discretisation is important for optimality, but increases rarity of schedulers.
\item$\{4\}$: Fine discretisation is important for optimality and optimal schedulers are not rare.
\end{itemize}
The results in~\Cref{fig:SMCResults} respect and differentiate our hierarchy.
In most cases, we found schedulers whose estimates were within the statistical error of calculated optima or of high confidence estimates achieved by alternative statistical techniques.
The exceptions involve $M_3$ and $M_4$.
We note that these models make use of an additional clock, increasing the dimensionality of the problem and potentially making near-optimal schedulers rarer.
The best results for $M_3$ and classes $\Sched^\mathit{ml}_{l,v,e},\Sched^\mathit{m,l}_{l,t,e},\Sched^\mathit{ml}_{l,e}$ were obtained using discretisation factor $n=2$: a compromise between nearness to optimality and rarity.
A greater compromise was necessary for $M_4$ and classes $\Sched^\mathit{ml}_{l,t,e},\Sched^\mathit{m,l}_{l,e}$, where we found near-optimal schedulers to be very rare and achieved best results using discretisation factor $n=1$.


The experiments demonstrate that lightweight scheduler sampling can produce useful and informative results with SA.
The present theoretical results will allow us to develop better abstractions for SA and thus to construct a refinement algorithm for efficient lightweight verification of SA that will be applicably to realistically sized case studies.
As is, they already demonstrate the importance of selecting a proper scheduler class for efficient verification, and that restricted classes are useful in planning scenarios.

\section{Conclusion}
\label{sec:Conclusion}

We have shown that the various notions of information available to a scheduler class, such as history, clock order, expiration times or overall elapsed time, almost all make distinct contributions to the power of the class in SA.
Our choice of notions was based on classic scheduler classes relevant for other stochastic models, previous literature on the character of nondeterminism in and verification of SA, and the need to synthesise simple schedulers in planning.
Our distinguishing examples clearly expose how to exploit each notion to improve the probability of reaching a goal.
For verification of SA, we have demonstrated the feasibility of lightweight scheduler sampling, where the different notions may be used to finely control the power of the lightweight schedulers.
To solve stochastic timed planning problems defined via SA, our analysis helps in the case-by-case selection of an appropriate scheduler class that achieves the desired tradeoff between optimal probabilities and ease of implementation of the resulting plan.

We expect the arguments of this paper to extend to steady-state/frequency measures (by adding loops back from absorbing to initial states in our examples), and that our results for classic schedulers transfer to SA with delayable actions.
We propose to use the results to develop better abstractions for SA, 
the next goal being a refinement algorithm for efficient lightweight verification of SA.

\end{document}